\documentclass[journal,web,onecolumn,draftclsmode]{ieeecolor} 
\usepackage{generic}
\usepackage{cite}
\usepackage{amsmath,amssymb,amsfonts}
\usepackage{algorithmic}
\usepackage{graphicx}
\usepackage{algorithm,algorithmic}
\usepackage{hyperref}
\hypersetup{hidelinks=true}
\usepackage{textcomp}
\usepackage{bbm}
\usepackage{enumerate}
\usepackage{rotating}
\usepackage[usenames,dvipsnames]{xcolor}

\def\BibTeX{{\rm B\kern-.05em{\sc i\kern-.025em b}\kern-.08em
    T\kern-.1667em\lower.7ex\hbox{E}\kern-.125emX}}
\newtheorem{theorem}{Theorem}
\newtheorem{remark}{Remark}

\newtheorem{lemma}{Lemma}

\newtheorem{definition}{Definition}

\newcommand{\infoa}[1]{\M_{#1}}
\newcommand{\infoaN}{\M}
\newcommand{\infoc}[1]{\N_{#1}}
\newcommand{\infocN}{\N}
\newcommand{\A}{{\cal A}}

\newcommand{\hide}[1]{}

\renewcommand{\aa}{{\bf a}}

\newcommand{\I}{{\cal I}}
\newcommand{\R}{{\cal R}}
\newcommand{\C}{{\cal C}}
\renewcommand{\O}{{\cal O}}
\newcommand{\N}{{\cal N}}
\newcommand{\M}{{\cal M}}
\newcommand{\poa}{{\rm pPoA}(f,w,\C,\N)}
\newcommand{\poamc}{{\rm pPoA}{(f^{\rm mc},w,\C,\N)}}
\newcommand{\poagen}{{\rm pPoA}(\{f_i\},w,\infoaN)}
\newcommand{\GG}{{\cal G}_{f,w}^{(\C,\N)}}
\newcommand{\hGG}{\hat{\cal G}_{f,w}^{(\C,\N)}}
\newcommand{\hatGG}{\hGG}
\newcommand{\aopt}{a^{\rm opt}}
\newcommand{\ane}{a^{\rm NE}}

\newcommand{\gentuple}{(a_1,x_1,b_1),\dots,(a_k,x_k,b_k)}
\newcommand{\ttuple}{t_1\dots,t_k}
\newcommand{\tltuple}{\Tilde{t}_1\dots,\Tilde{t}_k}
\newcommand{\eop}{{\hfill $\blacksquare$}}
\newcommand{\n}{\Tilde{n}}

\newcommand{\indc}[1]{\mathbbm{1}_{\{#1\}}}

\newcommand{\fes}{f^{\rm es}}
\newcommand{\fmc}{f^{\rm mc}}
\newcommand{\fcd}{f^{\rm cd}}

\definecolor{mygreen}{rgb}{0.1, 0.7, 0.1}
\definecolor{mybrown}{rgb}{0.74, 0.27, 0.07}

\newif\ifshowrevisioncolors
\showrevisioncolorsfalse   

\newcommand{\revcolor}[2]{%
  \ifshowrevisioncolors
    \textcolor{#1}{#2}%
  \else
    #2%
  \fi
}

\newcommand{\revtwo}[1]{\revcolor{blue}{#1}}
\newcommand{\revten}[1]{\revcolor{mygreen}{#1}}
\newcommand{\revnine}[1]{\revcolor{red}{#1}}
\newcommand{\revother}[1]{\revcolor{orange}{#1}}
\newcommand{\revsixteen}[1]{\revcolor{cyan}{#1}}

\begin{document}
\title{Deriving the \revtwo{Pure} Price of Anarchy for Networked Resource Allocation Games}
\author{Vartika Singh \IEEEmembership{Member, IEEE}, 
Philip N. Brown, \IEEEmembership{Member, IEEE}
\thanks{This material is based upon work supported by the Air Force Office of Scientific Research under award number FA9550-23-1-0171, the Army Research Office under grant number W911NF-25-1-0239, and the National Science Foundation under grant number ECCS-2440836.}
\thanks{Vartika Singh conducted this research while with the University of Colorado Colorado Springs, Colorado Springs, CO, 80918 USA (email: {\tt singh.vsvartika@gmail.com}).
}
\thanks{Philip N. Brown is with the University of Colorado Colorado Springs, Colorado Springs, CO, 80918 USA (email: {\tt pbrown2@uccs.edu}). 
He performed this research while a Visiting Professor at Politecnico di Torino, Italy.
}}

\maketitle

\begin{abstract}
This work considers multi-agent coordination with arbitrary information networks among the agents using a game-theoretic approach. A system designer aims to assign local utility functions to the agents to guide their actions toward a desired system objective. The performance of the assigned local utilities is measured by the well known \revtwo{pure price of anarchy (pPoA)} metric that equals the ratio of the system objective at the worst \revtwo{pure Nash equilibrium} of the corresponding game to the optimal system objective. 
Our aim is to derive the utility functions which optimize the \revtwo{pPoA-based} performance guarantees for any given information network and system objective. %
We develop a linear program that derives \revtwo{the optimal pPoA} for any arbitrary information network and arbitrary system objective. Our work is the first to solve optimal utility design for arbitrary networks; our techniques generalize previous approaches which considered only the full-information setting.
\revtwo{For supermodular objective functions, we prove that counterintuitively, a fully communication-denied utility design is optimal irrespective of the original information network.}
For submodular system objectives, an exhaustive numerical analysis suggests that the optimal utility design is robust to communication failures even for this case. 
When the system objective is weighted maximum coverage, the marginal contribution utility design \revtwo{provably optimizes the pPoA} for a wide variety of information networks of interest.
\end{abstract}

\begin{IEEEkeywords}
Network Games, PoA, Resource Allocation.
\end{IEEEkeywords}

\section{Introduction}
\label{sec:introduction}
With applications in  air traffic control \cite{atc1,atc2}, robotic networks \cite{rn1,rn3}, airport security \cite{airport}, transportation networks \cite{tn2} and  medical sciences \cite{ms1,ms2} etc, multi-agent coordination has emerged as a topic of interest among researchers. Coordinating agents in a centralized manner can be expensive or even infeasible; thus, a game-theoretic approach for distributed control has become a natural solution technique \cite{Marden, Marden2, vetta, Gairing,Seaton2023,Ramaswamy}.

In the game-theoretic approach, a system designer endows every agent with a local utility function to guide their actions.
The utility functions are selected so that the induced game's Nash equilibria achieve a favorable system objective value.
The designer then assigns agents an appropriate distributed algorithm to ensure they converge to a Nash equilibrium~\cite{Collins2025}.
Then, the performance of the assigned utility functions is measured by the \emph{price of anarchy} (PoA) metric which equals the ratio of the system objective at the worst performing equilibrium to that at the optimal \cite{PoA}. The system designer aims to assign utility functions to the agents to induce the optimal price of anarchy.

The coordination problem becomes more challenging when some agents lose access to \revother{information regarding the actions of other agents.}
This could be  due to constraints on information sharing, or a communication failure, or due to jamming by an adversary. Thus, the system designer must consider information networks while designing the utilities of the agents. 

The existing literature considers such information networks among the agents and provides lower and upper bounds on the optimal PoA \cite{Grimsman2018,Grimsman2020,Grimsman2022,Josh2022,Josh2023}.  However, most of the literature is focused on specific submodular system objective functions and specific utility functions, and lacks a methodology to derive the utility functions that optimize the PoA for general system objectives and general information networks. 
On the other hand, the literature that considers general system objectives  \cite{Marden,Marden2} focuses solely on full information networks where all agents can observe the actions of all other agents; i.e., the possibility of arbitrary information networks among the agents is not considered.

We aim to address this gap and build upon the framework of \cite{Marden} to include the information networks. In~\cite{Marden}, the authors derive the optimal PoA for general system objectives by solving a linear program (LP) for a complete network.
In the present paper, \revtwo{we focus our attention on the subproblem of optimizing \emph{pure Nash equilibria}; thus, we state all of our results in terms of the PoA defined in pure strategies known as the \emph{pure price of anarchy (pPoA).}
We develop an LP that derives the optimal pPoA for any general system objective \emph{and also any arbitrary information network.} }
\hide{This work significantly contributes to the existing literature on deriving the performance guarantees for networked games as depicted in Table \ref{table:1}.  
\begin{table}[]
\begin{tabular}{|c|c|c|c|c|c|}
\hline
\begin{tabular}{@{}c@{}}Optimal PoA\\
${\rm PoA}^*$ \end{tabular}&\begin{tabular}{@{}c@{}}${\rm PoA}^*$\\in \cite{Josh2023}  \end{tabular} & \begin{tabular}{@{}c@{}}${\rm PoA}^*$\\in \cite{Grimsman2020}  \end{tabular}& \begin{tabular}{@{}c@{}} ${\rm PoA}^*$ \\in \cite{Grimsman2022}\end{tabular} &\begin{tabular}{@{}c@{}} ${\rm PoA}^*$ in \\ this work\end{tabular}  \\ \hline
 \begin{tabular}{@{}c@{}}
Submodular \\games 
with \\
arbitrary network
\end{tabular}& $\ge \frac{1}{2+\frac{|\M^c|}{2}}$ & -- & $\le \frac{1}{\alpha(G)}$ &\begin{tabular}{@{}c@{}}  $\frac{1}{\mu^{opt}}$ from \\
LP \eqref{eqn_poa_opt_lp_gen} \end{tabular}  \\ \hline
 \begin{tabular}{@{}c@{}}
Supermodular \\games 
with\\
arbitrary network
\end{tabular} & --  & -- & -- &  $\frac{n}{w(n)}$ \\ \hline
\begin{tabular}{@{}c@{}}
Set cover game\\
where each pair \\has one agent \\
observing the other
\end{tabular} & $\ge \frac{1}{2+\frac{|\M^c|}{2}}$ & -- & $\le \infty$ &  $\frac{1}{2}$  \\ \hline
\begin{tabular}{@{}c@{}}
Set cover game\\
with a blind,\\ an isolated and\\
$2$ normal agents
\end{tabular} & $\ge \frac{1}{6}$  &  $\ge \frac{1}{2}$ & $\le \frac{1}{2}$  &  $\frac{1}{2}$ \\ \hline
\end{tabular}
\caption{This work derives \revtwo{the optimal pPoA by solving an LP \eqref{eqn_poa_opt_lp_gen} for any information network,} contributing to bounds in the existing literature. Analysis of the LP also provides closed form expressions for many interesting problems; $|\M^c|$ equals  the number of edges “missing” from  information graph $\M$, $w(n)$ is determines the system objective \eqref{eqn_syst_obj} and $n$ is number of agents.}\label{table:1}
\end{table}}

\revnine{Our models combine the anonymous separable resource allocation games of~\cite{Marden} with the information network parameterization of~\cite{Josh2023} to allow for substantial modeling expressiveness.
The combination of these allows for modeling quite general settings, including facility location, target assignment, sensor coverage, and network interdiction problems as described in~\cite{vetta,Gairing,Marden-Shamma}.}

Our main contributions are as follows: (i) a linear program deriving the optimal pPoA for any system objective and information network; (ii) we prove that \revtwo{when the system objective is supermodular, the fully communication-denied marginal contribution utility is optimal regardless of the underlying information network;} (iii) the optimal \revtwo{pPoA} is at most $1/2$ for set cover games if there is any communication failure; (iv) the well-known marginal contribution utility design is optimal for a set cover game with a network of blind, isolated and normal agents; (v) if the information graph among the agents is such that each pair of agent has at least one agent able to observe the other, the marginal contribution leads to \revtwo{pPoA} of $1/2$ for set cover games.

\revsixteen{This work also contributes to the bounds in the existing literature. 
For example, for a set cover game with a blind, an isolated and two normal agents, \cite{Josh2023} provides a lower bound of 1/6, \cite{Grimsman2020} provides a lower bound of 1/2, and \cite{Grimsman2022} provides an upper-bound of 1/2 for optimal price of anarchy.
This work directly proves the optimal price of anarchy equals 1/2 for such a network.
Further, to the best of our knowledge, the literature contains no bounds for the optimal price of anarchy for supermodular games with arbitrary information networks.
This work provides the closed form expression for optimal PoA for such games, completely independent of the network.}

We also consider the robustness of the \revtwo{pPoA-optimizing} utility design to unanticipated communication failures.
For supermodular system objectives, the optimal utility design for the full information case is proved to be optimal even when a communication failure leading to arbitrary information network occurs. For submodular games, we numerically show that optimal utility functions for the full-information case perform near optimally even after a communication failure happens. Further, the numerical examples also \revtwo{seem to indicate that the optimal pPoA is monotone with respect to the connectedness of the information network.}

\section{Problem Description}
Let $\R =\{r_1, \dots,r_m\}$ be a finite set of resources where every resource $r\in \R$ is associated with a value $v_r\ge 0$. Let $N = \{1,\dots,n\}$ be a finite set of agents. Each agent $i \in N$ has action set $\A_i \subseteq 2^\R$; the set of resources selected by agent $i$ is written $a_i\in\A_i$.
The welfare generated at a resource $r$ depends on $v_r$ and the number of agents selecting $r$. Let $\A:=\A_1\times\dots\times \A_n$ denote the set of action profiles of the agents.  For $\aa = (a_1,\dots,a_n) \in \A $, define $|\aa|_r$ to be the number of agents selecting resource $r$ in action profile $\aa$. Then the  total welfare generated under action profile $\aa$, $W:\A \to \mathbb{R}$ is given by
\begin{eqnarray}\label{eqn_syst_obj}
    W(\aa) &=& \sum_{r \in \cup_i a_i} v_r w(|\aa|_r),
\end{eqnarray}where \emph{basis function} $w(|\aa|_r)>0$  scales the value of a resource $r$ depending on the number of agents selecting it.  We make the following standing assumptions:
\begin{enumerate}[\textbf{A}.1]
    \item The system objective is $W(\aa)>0$ for some $\aa \in \A$.
    \item \revtwo{The basis function $w(0)=0$ and $w(j)>0$ for $j>0$.}
\end{enumerate}

The aim of the system designer is to assign local utility functions to the agents that guide their actions towards an action profile that maximizes the  system objective \eqref{eqn_syst_obj}. If assumption \textbf{A}.1 is not true for some problem, then all the action profiles lead to the system objective value of zero  irrespective of utility functions chosen by system designer, thus \textbf{A}.1 is justified. \revtwo{Prior work~\cite{Marden} has observed that if $w(j)=0$ for $j>0$, the resulting PoA is 0; hence the restriction of assumption \textbf{A}.2. \emph{Without loss of generality, we scale $w(1)=1$ throughout the paper.}}

This framework to model a resource allocation problem is considered in \cite{Marden}, which we generalize to consider general information networks among the agents in this work.

\subsection{Local Utilities and Utility Generating Mechanism}
In contrast to prior work \cite{Marden}, we consider general information networks, where the information available to the agents can be described using an information graph  $\infoaN:=(\infoa{1},\dots,\infoa{n})$, where $\infoa{i}\subseteq N$ is the set of agents whose actions agent $i$ can observe (naturally, $i\in \infoa{i}$).
Whereas in \cite{Marden}, a full information network is considered with $\infoa{i}=N$ for all $i\in N$.

Let $[p]$ represent the set $\{1,\dots,p\}$ for any positive integer $p$ and $\{0,[p]\}$ represent $\{0,1,\dots,p\}$. Since the agents can observe a subset of agents, we modify the local utilities defined in \cite{Marden} to cater to this. Let $|\aa|^{\infoa{i}}_r$ be the number of agents in  $\infoa{i}$ that select resource $r$ in action profile $\aa$. Let $f_i: [n] \to \mathbb{R}$ be a \emph{utility generating mechanism} chosen by the system designer (for agent $i$) that scales the perceived value of a resource if it is chosen  by $|\aa|^{\infoa{i}}_r$ number of agents \revtwo{with assumption $f(0)=0$.} Then,  the  local utility functions can be defined as,
\begin{eqnarray}\label{eqn_util}
     U_i(\aa) &=& \sum_{r \in  a_i} v_r f_i(|\aa|^{\infoa{i}}_r), \hspace{5mm} \mbox{ for } i \in N.
\end{eqnarray}In \cite{Marden}, the system designer always chooses the same utility generating mechanism $f$ for all the agents, since all the agents have access to the same information i.e., $f_i=f$ for all $i\in N$. 
However, in our work, the system designer can choose different utility generating mechanisms for different agents in~\eqref{eqn_util}.

\revtwo{Three well-known examples of local utility functions and corresponding utility generating mechanisms are as follows~\cite{Marden2,Grimsman2020}.}

\subsubsection{Marginal Contribution Utility} In this mechanism, the utility of an agent equals the marginal gain in the system objective when that player participates. That is, the utility of agent $i$  at action profile $\aa$ equals \begin{eqnarray*}
    U_i(\aa) &=& \sum_{r \in  a_i} v_r [w(|\aa|^{\infoa{i}}_r)-w(|\aa|^{\infoa{i}}_r-1)].
\end{eqnarray*}The above utility is obtained by utility generating mechanism  $f^{mc}_i$ defined as
\begin{equation}\label{eqn_def_MC}
  \fmc_i(x) = w(x)-w(x-1) \mbox{ for } x \in [|\infoa{i}|] \mbox{ and } \fmc_i(0)=0.
\end{equation}

\subsubsection{Equal Share Utility}
The well-known  equal share utility for an agent $i$  in action profile $\aa$ equals 
\begin{eqnarray*}
    U_i(\aa) &=& \sum_{r \in  a_i} v_r \frac{w(|\aa|^{\infoa{i}}_r)}{|\aa|^{\infoa{i}}_r}.
\end{eqnarray*}Such a utility is obtained by utility generating mechanism  $\fes_i$ defined as
\begin{equation}\label{eqn_def_ES}
  \fes_i(x)= \frac{w(x)}{x} \mbox{ for } x\in [|\infoa{i}|]\mbox{ and } \fes_i(0)=0.
\end{equation}

\subsubsection{Communication-Denied Utility}
The communication-denied (called ``blind'' in~\cite{Grimsman2020}) utility for an agent $i$ in action profile $\aa$ equals 
\begin{eqnarray*}
    U_i(\aa) &=& \sum_{r \in  a_i} v_r .
\end{eqnarray*}Such a utility is obtained by utility generating mechanism  $\fcd_i$ defined as
\begin{equation}\label{eqn_def_CD}
  \fcd_i(x)= 1 \mbox{ for } x\geq1\mbox{ and } \fcd_i(0)=0.
\end{equation}

\subsection{Resulting Class of Games and Solution}

Once the system designer chooses a utility generating mechanism $\{f_i\}_{i\in N}$, this induces a game among the agents defined by $G = \langle \R,w, N, \{\A_i\}_{i \in N},\infoaN, \{f_i\}_{i\in N} \rangle$. This work focuses on the solution concept of pure strategy Nash equilibrium \revtwo{(PNE)} for any $G$, defined as follows. For any $\aa$, let $(\Tilde{a}_i,a_{-i})$ represent the action profile where agent $i$ unilaterally deviates to $\Tilde{a}_i$ from $a_i$. Then \revtwo{PNE} is defined as
\begin{definition}[Pure Strategy Nash Equilibrium]
   For any $G$, an action profile $\aa$ is a pure strategy Nash equilibrium if $U_i(\aa) \ge U_i(\Tilde{a}_i,a_{-i})$ for all $\Tilde{a}_i \in \A_i$ and all  $i \in N$.
\end{definition}

\revtwo{The utility design literature has broadly adopted PNE as a solution concept of interest, due to the fact that in many relevant classes of games, games are guaranteed to possess a PNE and simple learning distributed dynamics are guaranteed to converge to a PNE~\cite{learning,Marden-Shamma,Marden}.
Thus, establishing bounds on the performance of pure Nash equilibria is of interest to the system designer since such a bound serves as an algorithmic approximation ratio for an entire class of distributed algorithms.}

\revtwo{In the context of this paper, it is important to note that the existence of PNE is dependent on a combination of the utility generating mechanism and the specific information network; there are combinations of network and utility generating mechanism such that some game instances do not possess a PNE.}
\revtwo{Nonetheless, as we will discuss, many meaningful results can still be shown by focusing solely on PNE.
Thus, for the remainder of the paper {we restrict our attention only to the games that have at least one pure strategy Nash equilibrium.}
The specific ramifications of this restriction will be discussed where applicable; in some cases, we will even show that this restriction is actually made without loss of generality.}

\revtwo{Let $\aopt$ denote an optimal action profile. The aim of the system designer is to choose utility generating mechanisms $\{f_i\}_{i\in N}$ such that $W(\ane)$ is `close' to $W(\aopt)$ for all PNE, $\ane$, of the resulting game.} In general, the exact resource set $\R$ and exact action sets of the agents $\{\A_i\}_{i\in N}$ may not be known at design-time.
The system designer may only know the basis function $w(\cdot)$ and information graph $\infoaN$, and choose mechanism $\{f_i\}_{i\in N}$ based on this information. Thus it is important to derive the performance guarantee of $\{f_i\}_{i\in N}$ for all possible games.

For any $w$ and $\M$, the system designer chooses $\{f_i\}_{i\in N}$ which leads to a class of games ${\cal G}_{\{f_i\},w}^\infoaN$ that have
\begin{enumerate}[\textbf{G}.1]
    \item $n$ agents,  any possible $\R$, any possible $\{\A_i\}_{i\in N}$,
    \item system objective $W(\cdot)$ defined in \eqref{eqn_syst_obj} with $w(\cdot)$ as the basis function,
    \item local utilities as in \eqref{eqn_util} with $\{f_i\}_{i\in N}$ as  utility generating mechanism and  $\infoaN$ as the information graph. 
    \item \revtwo{at least one PNE.}
\end{enumerate}Then, the performance of $f$ can be measured using the Price of Anarchy (PoA) metric (see \cite{PoA}), defined as
\revtwo{
\begin{eqnarray}\label{eqn_def_poa}
  \poagen = \inf_{G \in {\cal G}_{\{f_i\},w}^\infoaN}\left(\frac{\min_{\aa \in {\rm PNE}(G)} W(\aa)}{\max_{\aa \in \A}W(\aa)}\right),
\end{eqnarray}where ${\rm PNE}(G)$ is the set of all PNE of $G\in {\cal G}_{\{f_i\},w}^\infoaN$\footnote{Here, for any game $G$ which possesses no PNE (i.e., ${\rm PNE}(G)=\emptyset$), we define the ratio in~\eqref{eqn_def_poa} to be 1.
However, note that ${\cal G}_{\{f_i\},w}^\infoaN$ trivially always possesses at least one game $G$ such that ${\rm PNE}(G)\neq \emptyset$, so~\eqref{eqn_def_poa} is always well-defined. }}

\revtwo{In order to derive the performance guarantee of $\{f_i\}_{i\in N}$, we will define a linear program which will also turn out to be instrumental in deriving the utility generating mechanism  $\{f^{opt}_i\}_{i\in N}$ that will result in optimal performance.} The next section presents the required structure which facilitates this.


\textbf{Remark:} This work  considers the resource allocation problem considered in \cite{Marden} and extends the problem to general information networks among the agents. The introduction of general information networks makes the resource allocation problem widely applicable in presence of possible communication failures, constrained information sharing, \revother{and other applications.}

\begin{figure}
    \centering  
    \includegraphics[trim ={4cm 7.5cm 4cm 7.5cm},clip,scale=0.45]{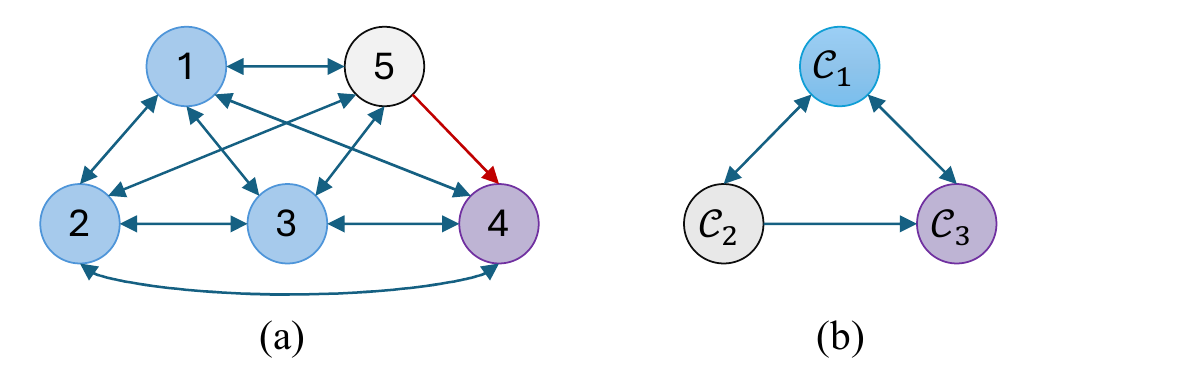}
    \caption{A general network (a) and corresponding classes (b). All the agents can see the actions of all other agents except agent $4$ who can not observe the actions of agent $5$. Thus, the information available to various agents equals $\infoa{i}  = \{1,2,3,4,5\}$ for $i=1,2,3,5$ and $\infoa{4}=\{1,2,3,4\}$. Agents $1$, $2$ and $3$ observe the same set of agents and are observed by the same set of agents, hence belong to the same class $\C_1$. Agent $5$ also observes the same set of agents but cannot be observed by agent $4$, thus belongs to class $\C_2$. Agent $4$ belongs to class $\C_3$.  Then $\infoc{1}= \infoc{2}= \{\C_1,\C_2,\C_3\}$ and $\infoc{3} = \{\C_1 , \C_4\}$, and \textbf{C}.1-\textbf{C}.3 are satisfied. 
}
    \label{fig_networks}
\end{figure}

\subsection{Information Graph and Classes}
 One can utilize the given information graph $\infoaN$ to  partition the set of agents to  classes of agents with `similar attributes'. We say two agents $i$ and $\Tilde{i}$ are similar if (i) they can observe the same set of agents, that is, $\infoa{i}=\infoa{\Tilde{i}}$, and (ii) if they can be observed by the same set of agents, that is,  $\{\infoa{i'}: i \in \infoa{i'},\ i'\in N\}=\{\infoa{i'}: \Tilde{i} \in \infoa{i'},\ i'\in N\}$. The motivation behind such a partition stems from the fact that similar agents affect the game in a similar manner; thus, the system designer can assign them the same local utility functions.  Let $\C_j$ represent a class of similar agents and $\infoc{j}$ be the set of agents whose actions agents in class $\C_j$ can observe (see Figure \ref{fig_networks}).  
%
%
A partition of $N$ such that similar agents are assigned to the same class is accomplished by  \revtwo{creating classes} where
\begin{enumerate}[\textbf{C}.1]
    \item every agent is assigned to exactly one class,
    \item every agent in class $\C_j$ observes every other agent in $\C_j$,
    \item \revtwo{every agent in  $\C_j$ observes every agent in classes $\N_j$ given by $\{\C_{m(j)}:m(j)\in \O_j\}$ where $\O_j\subseteq [k]$.}
\end{enumerate}Observe that \textbf{C}.2 along with \textbf{C}.3 implies that $\infoa{i} =\infoa{i'}$ for $i,i'\in \C_j$. Further, \textbf{C}.3 also implies that all the agents in a class are observed by the same set of agents.\footnote{\revtwo{It is important to note that the partitioning of any network according to \textbf{C}.1-\textbf{C}.3 is not restrictive. Indeed,  any network can be partitioned by  making singleton classes $\C_i=\{i\}$ for $i\in N$.}}
Then, any network can be completely specified using $(\C,\infocN)$ where $\C=\{\C_1,\dots \C_k\}$ is  the set of classes  of agents for some $k \le n$, and $\infocN=\{\infoc{1},\dots,\infoc{k}\}$ represents the information available to various classes. Thus from here onward we refer to $(\C,\N)$ as the information network. 

\revtwo{Throughout the paper, $i\in N$ represents an agent, $\C_j$ represents a class with $j\in[k]$ where $k$ represents the number of classes in the network $(\C,\N)$.}
 
\subsection{\revtwo{Pure Price of Anarchy} for Network $(\C,\infocN)$} 
For any information network  $(\C,\N)$, all the agents in any class $\C_j$ are similar hence the system designer can assign them the same utility generating mechanism. Let $\kappa_j$ represent the number of agents in class $\C_j$. Let $s_j = \sum_{l\in[k]} \kappa_l \indc{\C_l \in \N_j}$  be the total number of agents that any $i \in \C_j$ can observe (note that $s_j = |\infoa{i}|$ for all $i\in \C_j$). Then we use $f_j:[s_j] \to \mathbb{R}$ to represent the utility generating mechanism for class $\C_j$ for any $j\in [k]$, that is, for all $i \in \C_j$ the utility generating mechanism  $f_i =f_j$.

 Let $|\aa|^{\infoc{j}}_r$ be the number of agents in $\infoc{j}$ selecting resource $r$ in action profile $\aa$. Then,
the utility of an agent $i \in \C_j$ defined in \eqref{eqn_util} modifies to
\begin{eqnarray}\label{eqn_util_gen}
     U_i(a_i,a_{-i})&=& \sum_{r \in  a_i} v_r f_j(|\aa|^{\infoc{j}}_r). 
\end{eqnarray}

When the system designer chooses a set of utility generating mechanisms represented by $f=\{f_j\}_{j\in[k]}$ for a given network $(\C,\N)$ and basis function $w$, this defines a class of games ${\cal G}_{f,w}^{(\C,\infocN)}$ that satisfy $\textbf{G}.1-\textbf{G}.2$, and  have
\begin{enumerate}[\textbf{G}.1]
\setcounter{enumi}{4}
   \item local utilities given by \eqref{eqn_util_gen} with $\{f_j\}_{j\in[k]}$ as  utility generating mechanism and  $(\C,\infocN)$ as the network, where $\C =\{\C_1,\dots,\C_k\}$ and $\infocN=\{\infoc{1},\dots,\infoc{k}\}$ for $k \le n$. 
\end{enumerate}The performance of $f$  for the information network $(\C,\infocN)$ with basis function $w$ and  utility generating mechanism $f$ can be measured by modifying the \revtwo{pPoA \eqref{eqn_def_poa} to the following:\footnote{As in~\eqref{eqn_def_poa}, for any game $G$ which possesses no PNE (i.e., ${\rm PNE}(G)=\emptyset$), we define the ratio in~\eqref{eqn_poa_gen} to be 1.
However, note that $\GG$ trivially always possesses at least one game $G$ such that ${\rm PNE}(G)\neq \emptyset$, so~\eqref{eqn_poa_gen} is always well-defined. }
\begin{eqnarray}\label{eqn_poa_gen}
    \poa= \inf_{G \in \GG}\left(\frac{\min_{\aa \in {\rm PNE}(G)} W(\aa)}{\max_{\aa \in \A}W(\aa)}\right).
\end{eqnarray}}

\revtwo{The authors in \cite{Marden} provide a linear program that characterizes the pPoA} for any utility generating mechanism for the full information case; this can be viewed as a special case of our problem with information network $(\C,\N)$ where $\C=\{\C_1\}$, ${\cal N}_1=\C_1$ and $\C_1=N$. Further, they provide another linear program that derives the optimal utility generating mechanism, \revtwo{again only for the full information case, and requiring all agents to have the same utility generating mechanism.}

In this work, we aim to derive a linear program that caters to general information networks and solves \eqref{eqn_poa_gen}.

\section{Supermodular Games}
\label{sec:supermod}
We first consider the resource allocation games where the system objective is supermodular and non-decreasing\cite{Marden}. A supermodular objective function is achieved by non-decreasing and convex basis function $w(\cdot)$. That is, we assume that the basis function $w(\cdot)$ satisfies the following:
\begin{eqnarray}\label{eqn_def_sup_mod}
    w(j+1) &\ge& w(j) \mbox{ with }  w(0)=0 \mbox{ and }  \\
   w(j+1)-w(j)  &\ge& w(j)-w(j-1) \mbox{ for all } j\in [n-1] . \nonumber
\end{eqnarray}
The first main result presented in Theorem \ref{thm_sup_mod}  derives the \revtwo{pPoA}  for supermodular games with arbitrary information networks. Authors in \cite{Marden2} provide a similar result for the special case where all the agents belong to one class, $\C = \C_1$ and $\infoc{1} = \C_1$; in other words, all the agents can observe the actions of all the other agents. Theorem \ref{thm_sup_mod} (proof in Appendix \ref{appen_remaining_proofs}) extends that result to any arbitrary information network.

\revtwo{\begin{theorem}\label{thm_sup_mod}
  For any supermodular game with $n$ agents, information network $(\C,\infocN)$, and basis function $w(\cdot)$ satisfying \eqref{eqn_def_sup_mod},
    \begin{enumerate}[(i)]
        \item For any $f$, the pure price of anarchy satisfies 
        \begin{equation}\label{eqn_opt_poa_sup_mod}
            \poa \le \frac{n}{w(n)}.
        \end{equation}
        \item The communication-denied utility mechanism $\fcd:= \{\fcd_j\}_{j\in[k]}$ as in \eqref{eqn_def_CD} obtains the upper bound on PoA in \eqref{eqn_opt_poa_sup_mod}.
    \end{enumerate}
\end{theorem}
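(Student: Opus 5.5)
Part (i) follows from a single hard instance that works for every $f$, and part (ii) from a direct comparison chain that uses only two consequences of \eqref{eqn_def_sup_mod}. Since $w(0)=0$, $w(1)=1$ and the increments $w(j+1)-w(j)$ are nondecreasing, every increment is at least $w(1)-w(0)=1$. This gives
\[
w(x)\ge x \quad\text{for all } x\in\{0,[n]\}.
\]
Convexity together with $w(0)=0$ also makes $w(x)/x$ nondecreasing on $[n]$. Hence
\[
w(x)\le \frac{x\,w(n)}{n}\quad\text{for } x\in\{0,[n]\},
\]
and in particular $w(n)\ge n$.

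\textbf{Part (i).} Fix an arbitrary $f=\{f_j\}_{j\in[k]}$ and build a game in $\GG$ as follows. There is one common resource $r_0$ with $v_{r_0}=1$, and each agent $i$ has a private resource $r_i$ with $v_{r_i}=1$. Agent $i$'s action set is $\A_i=\{\{r_0\},\{r_i\}\}$, and the network $(\C,\N)$ is arbitrary.

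Consider the profile $\aa$ in which every agent picks its private resource. For $i\in\C_j$ the utility is $U_i(\aa)=f_j(1)$, since only $i$ itself is on $r_i$. If $i$ deviates to $r_0$, it is alone there and again receives $f_j(1)$. The Nash inequality therefore holds with equality for every agent and every $f$, so $\aa$ is a PNE and the game satisfies \textbf{G}.4. This profile has $W(\aa)=n$, while the profile with all agents on $r_0$ has welfare $w(n)$. Hence
\[
\poa\le \frac{n}{\max\{n,w(n)\}}=\frac{n}{w(n)},
\]
using $w(n)\ge n$.

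\textbf{Part (ii).} Under $\fcd$, agent $i$'s utility is $\sum_{r\in a_i}v_r$, which does not depend on $a_{-i}$ or on the network. So $\aa$ is a PNE exactly when every $a_i$ maximizes $v(a_i):=\sum_{r\in a_i}v_r$ over $\A_i$. Such a profile always exists, so no game is lost by restriction \textbf{G}.4. For any PNE $\ane$ and optimum $\aopt$, the chain is:
\begin{align*}
W(\aopt)&=\sum_r v_r\,w(|\aopt|_r)\le \frac{w(n)}{n}\sum_r v_r|\aopt|_r=\frac{w(n)}{n}\sum_{i\in N}v(\aopt_i)\\
&\le \frac{w(n)}{n}\sum_{i\in N}v(\ane_i)=\frac{w(n)}{n}\sum_r v_r|\ane|_r\le\frac{w(n)}{n}\sum_r v_r\,w(|\ane|_r)=\frac{w(n)}{n}W(\ane).
\end{align*}
The first and last inequalities are the two consequences of \eqref{eqn_def_sup_mod} derived above. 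The middle inequality is the equilibrium condition applied agent by agent. This gives $\poa\ge n/w(n)$ for $\fcd$, which matches part (i).

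The main point to check is that the tie in part (i) is legitimate: the PNE definition uses a weak inequality, so the tie suffices, and it holds regardless of the sign of $f_j(1)$. Beyond that, the only remaining step is verifying that convexity with $w(0)=0$ indeed yields monotonicity of $w(x)/x$. That follows by the standard chord-slope argument for convex sequences.
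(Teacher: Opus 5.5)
Your proof is correct, and it takes a genuinely different route from the paper's. The paper's proof of this theorem works entirely inside the dual LP of Theorem~\ref{thm_dual_lp_gen}.

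\textbf{Paper's route.} For (i) it normalizes $f_j(1)=1$ via Lemma~\ref{lem_scaling_f}. It then combines two families of constraints: the one at the configuration with $b_j=\kappa_j$ for all $j$, which gives $\sum_j\kappa_j\lambda_j\ge w(n)$, and those at configurations with a single $a_j=1$, which give $\mu\ge\lambda_j$. Together these yield $\mu^*\ge w(n)/n$. For (ii) it argues that under $\fcd$ the constraints with $A_t\neq0$ and $B_t\neq0$ are implied by the remaining ones. It then verifies that $\lambda_j=\mu=w(n)/n$ is feasible, using monotonicity of $w(x)/x$.

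\textbf{Relation between the two.} Your arguments are exactly the primal and dual objects behind this. Your instance (one shared resource plus $n$ private ones) is, up to scaling, the primal solution supported on precisely those two configuration types. Your inequality chain is the dual certificate $\lambda_j=w(n)/n$ unpacked into a direct smoothness-style argument.

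\textbf{What your route buys.} It is self-contained: it does not rest on Theorem~\ref{thm_dual_lp_gen} (relaxation tightness, strong duality, the reduction from $\I$ to $\I_\R$). It needs neither the normalization nor the separate case $f_j(1)\le 0$. It establishes PNE existence under $\fcd$ directly rather than by citation. It also makes the network-independence transparent. In your hard instance each agent is alone on its resource both at equilibrium and after deviating, so no information structure can break the tie, and $\fcd$ ignores the network altogether.

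\textbf{What the paper's route buys.} It keeps this proof uniform with the LP template that the paper reuses for the set-cover results.
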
}
\vspace{3mm}

\revnine{Note that the upper bound on pPoA for a resource allocation game with supermodular basis function does not depend upon the network structure.}
\revtwo{Further, the upper bound is achieved by communication-denied utility mechanism and any game is guaranteed to have a PNE under this utility mechanism. This makes communication-denied utility mechanism an \textit{optimal utility mechanism} for supermodular games. The optimal pPoA does not change when one has a full information case, or arbitrary information network where some of the agents cannot observe the actions of some other agents. }

\revtwo{Further, \emph{because it preemptively severs communication, the communication-denied utility mechanism is optimal for any arbitrary information network $(\C,\infocN)$, and hence by construction it is robust against unanticipated communication failures among the agents.}
Note that the above theorem does not claim that the optimal utility generating mechanism is unique.
Indeed,~\cite{Marden2} proves that the equal share utility generating mechanism~\eqref{eqn_def_ES} is optimal for the full-information case; but interestingly it achieves the same pPoA bound as in our Theorem~\ref{thm_sup_mod} in~\eqref{eqn_opt_poa_sup_mod}. 
Our result demonstrates the essential robustness of multiagent coordination in supermodular games.}
\revnine{Note also that if network dynamics are sufficiently slow compared to agent action dynamics, this result essentially applies to dynamic networks as well.}

\revtwo{
\begin{remark}
It is important to point out that Theorem~\ref{thm_sup_mod}'s restriction to the pure price of anarchy is without loss of generality: the upper bound in~\eqref{eqn_opt_poa_sup_mod} applies to mixed Nash equilibria as well, and the communication-denied utility generating mechanism always results in games which are guaranteed to possess at least one pure Nash equilibrium (proved in~\cite{Grimsman2020}).
\end{remark}
}

\section{Submodular Games}
\label{sec:submod}
Now we present the results on \revtwo{pure} price of anarchy for another very important class of resource allocation problems, which have a non-decreasing and submodular objective function (e.g.,\cite{Grimsman2020,Josh2022}). Such an objective function is achieved by a non-decreasing and concave basis function. That is, the basis function $w(\cdot)$ satisfies the following:
\revten{
\begin{eqnarray}\label{eqn_def_sub_mod}
    w(j+1) &\ge& w(j) \mbox{ with }  w(0)=0 \mbox{ and }  \\
    w(j+1)-w(j)  &\le& w(j)-w(j-1) \mbox{ for all } j\in [n-1] . \nonumber
\end{eqnarray}}
For submodular games, a result analogous to Theorem~\ref{thm_sup_mod} does not hold; the \revtwo{pure price of anarchy} can depend upon the information network $(\C,\N)$ \revtwo{and basis function $w$} under consideration. Thus, we provide the results for information networks that are commonly considered in the literature \cite{Grimsman2020,Josh2022}, while the case of any arbitrary information networks (and arbitrary basis function) is deferred to Section \ref{sec_arbitrary_games}.
\revnine{Nonetheless, note that our theorems in this section give analytical results which apply to broad classes of information networks, rather than very specific fixed ones.}

\subsection{Set Cover Games}
Consider the special class of submodular games widely known as \emph{set cover games} \cite{Gairing}, where the basis function $w$ is defined as
\begin{equation}\label{eqn_def_set_cov}
    w(j) =1  \mbox{ for all } j\ge1  \mbox{ and } w(0)=0.
\end{equation}

Such a basis function implies that once a resource is selected by an agent, there is no gain in system objective if more agents select the same resource.  

\subsubsection{A General Upper Bound on Networked Set Cover Games}
The next theorem  (proof in Appendix \ref{appen_remaining_proofs}) provides an upper bound on the \revtwo{pure price of anarchy} for set cover games for information networks where some agents can observe the actions of only a subset of agents. 
\begin{theorem}\label{thm_set_cov_com_fail}
     Consider a set cover game with $n$ agents, information network $(\C,\infocN)$, and basis function $w(\cdot)$ satisfying \eqref{eqn_def_set_cov}.  \revtwo{If the information network is not full-information, i.e.,} there exist classes $\C_l \in \C$ and $\C_p \in \C$ such that $\C_p \not \in \infoc{l}$, then
     \revtwo{\begin{eqnarray}
          \poa &\le& \frac{1}{2}.
     \end{eqnarray}}
\end{theorem}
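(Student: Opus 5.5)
The plan is to prove the bound by exhibiting, for an arbitrary mechanism $f=\{f_j\}_{j\in[k]}$, a single game in $\GG$ that has a pure Nash equilibrium whose welfare is exactly half the optimum. Since $\poa$ is an infimum over $\GG$, such a game immediately gives $\poa\le 1/2$. The key is to exploit the missing link. By \textbf{C}.2, $\C_p\neq\C_l$. Fix an agent $i\in\C_l$ and an agent $i'\in\C_p$. By \textbf{C}.3, $i$ cannot observe $i'$, so $i'$ never contributes to $|\aa|^{\infoc{l}}_r$. Consequently agent $i$ cannot tell whether a resource it selects is shared with $i'$.

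\emph{Construction.} Take two resources $r_1,r_2$ with $v_{r_1}=v_{r_2}=1$, and set $\A_i=\{\{r_1\},\{r_2\}\}$ and $\A_{i'}=\{\{r_1\}\}$. Every other agent $h\notin\{i,i'\}$ receives a single action $\{d_h\}$, where $d_h$ is a private dummy resource of value $v_{d_h}=0$. These dummies add nothing to $W$. Because they are distinct from $r_1,r_2$ and from each other, they do not change any count $|\aa|^{\infoc{j}}_r$ at $r_1$ or $r_2$. (If empty actions are allowed, $\A_h=\{\emptyset\}$ works equally well.) Under the set cover basis \eqref{eqn_def_set_cov}, the optimum places $i$ on $r_2$ and $i'$ on $r_1$, giving $W(\aopt)=2$.

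\emph{Equilibrium check.} Consider the profile $\aa$ in which both $i$ and $i'$ play $\{r_1\}$. Its welfare is $W(\aa)=w(2)=1$. Agents other than $i$ have singleton action sets, so they trivially have no profitable deviation. For agent $i$, the only agent in $\infoc{l}$ that could be on $r_1$ or $r_2$ is $i$ itself, since $i'\notin\infoc{l}$. Hence $U_i(\aa)=f_l(1)$, and the deviation to $\{r_2\}$ also yields $f_l(1)$. The two utilities are equal whatever the value (or sign) of $f_l(1)$. So $\aa$ is a PNE, the game satisfies \textbf{G}.4 and belongs to $\GG$, and the worst-PNE ratio is at most $1/2$. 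Taking the infimum gives the claim for every $f$.

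\emph{Main obstacle.} The main point needing care is making sure the auxiliary agents cannot interfere with the counts that agent $i$ observes, and that the construction is admissible under the model. Using private zero-value dummy resources handles the first concern. Verifying that indifference suffices for the weak Nash inequality handles the second, and it is exactly this indifference that makes the argument independent of the mechanism $f$.
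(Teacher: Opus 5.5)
Your proof is correct, but it takes a different route from the paper's.

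\textbf{The paper's route.} The paper works entirely inside the dual LP of Theorem~\ref{thm_dual_lp_gen}. It first dismisses the case $f_j(1)\le 0$, where the pPoA is $0$. It then reads off $\lambda_l\ge 1/f_l(1)$ from the constraint at the configuration with only $b_l=1$. Finally it evaluates the constraint at the configuration with $a_l=1$, $x_p=1$. There $A_t=2$, $B_t=1$, and $A_{t,l}=1$ because $p\notin\O_l$, which gives $\mu\ge 1+\lambda_l f_l(1)\ge 2$.

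\textbf{How your game relates.} Your two-resource game is the primal-side counterpart of that dual certificate. The resource $r_1$ has type $(a_l=1,x_p=1)$ and $r_2$ has type $(b_l=1)$, each with $\theta=1$. The class-$l$ equilibrium constraint then reads $f_l(1)-f_l(1)=0\ge 0$, the normalization $w(A_t)$-sum is $1$, and the objective is $2$.

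\textbf{What your version buys.} It is self-contained: it does not rely on the LP characterization, and hence not on the construction in Lemma~\ref{lem_gen_relaxation_equiv}, strong duality, or the reduction from $\I$ to $\I_\R$. It needs no case split on the sign of $f_l(1)$, because the indifference of agent $i$ holds for every real value. It also makes transparent why the bound is independent of the mechanism: an agent of $\C_l$ cannot tell a resource shared with $\C_p$ from an unshared one.

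\textbf{What the paper's version buys.} It is uniform with the rest of the paper. The same two ingredients, $\lambda_j\ge 1/f_j(1)$ plus one well-chosen configuration, are reused in Theorems~\ref{thm_bl_is_sc} and~\ref{thm_gen_sc}. The same LP machinery also yields the matching lower bounds on the pPoA of $\fmc$, which a single explicit instance cannot provide.
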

\vspace{3mm}

If the hypothesis of above theorem is not satisfied by a network $(\C,\N)$, that implies that  all the agents can observe the actions chosen by all other agents leading to a full information network. For set cover games with full information, it is known that the optimal PoA is $1-1/e \approx 0.63$~\cite{Gairing}. 
However, our Theorem \ref{thm_set_cov_com_fail} implies that if there is even one agent whose actions can not be observed by some other agent, the \revother{pure} price of anarchy immediately reduces to $\frac{1}{2}$. This result implies that if there is a communication failure in a system leading to some agents not being able to observe the others, the PoA will always be less than half irrespective of the size of the network, number of agents in the system and the choice of utility generating mechanism.

\subsubsection{Set Cover Games with Blind and Isolated Agents}
An interesting case of information networks is the one with blind, isolated and normal agents. We borrow the terminology of blind and isolated agents from \cite{Grimsman2020}, and define these agents as follows:
\begin{enumerate}
    \item \textbf{Blind Agents:}  \revtwo{These are the agents who cannot observe anyone but themselves; however, they can be observed by other agents in the system. Formally, an agent $i \in \C_j$ for some $j\in[k]$   is \emph{blind} if $\C_j = \{i\}$ and $\infoc{j} = \C_j$, and there exists some $p \in [k]\backslash\{j\}$ such that $\C_j \in \infoc{p}$.} Let $\C_B\subseteq\C$ represent the collection of classes of blind agents. 
    \item \textbf{Isolated Agents:} \revtwo{These agents cannot observe any other agents but themselves. 
    Further, they cannot be observed by any other agents. Formally, an agent $i \in \C_j$ for some $j\in[k]$   is \emph{isolated} if $\C_j = \{i\}$ and $\infoc{j} = \C_j$, and  $\C_j \not\in \infoc{p}$ for any $p \in [k]\backslash\{j\}$.} 
    Let $\C_I\subseteq\C$ represent the collection of classes of isolated agents. 
    \item \textbf{Normal Agents:} \revtwo{These are the agents who can observe all the agents except isolated agents. Formally, an agent $i \in \C_j$ for some $j\in[k]$   is \emph{normal} if $\C_j = \{i: i\not\in \C_B \cup \C_I\}$, and $\infoc{j} = \{\C_j,\C_B\}$. }
\end{enumerate}

For a network consisting of blind, isolated and normal agents, we have the following result  (proof in Appendix \ref{appen_remaining_proofs}). 
\begin{theorem}\label{thm_bl_is_sc}
   Consider a set cover game with information network $(\C,\infocN)$ such that there are $\kappa_1$ blind agents, $\kappa_2$ isolated agents and $n-\kappa$ normal agents where $\kappa:=\kappa_1+\kappa_2>0$ and  with basis function $w$ satisfying \eqref{eqn_def_set_cov}. \revtwo{Then,
   \begin{enumerate}[(i)]
\item The pure price of anarchy satisfies
\begin{equation}\label{eqn_opt_poa_bl_is_sc}
           \poa \le \max\left\{\frac{1}{1+\kappa},\frac{1}{n}\right\}.
       \end{equation}
       \item If $\kappa_1>0$, the  marginal contribution utility mechanism $\fmc:= \{\fmc_j\}_{j\in[k]}$ as in \eqref{eqn_def_MC}  obtains the bound in \eqref{eqn_opt_poa_bl_is_sc}.
       \item If $\kappa_1=0$, the  marginal contribution utility mechanism $\fmc$
       as in \eqref{eqn_def_MC} satisfies
       \begin{eqnarray}\label{eqn_mc_poa_bl_is_sc}
           \poamc =  \max\left\{\frac{1}{2+\kappa},\frac{1}{n}\right\}.
       \end{eqnarray}
       \end{enumerate}  }
\end{theorem}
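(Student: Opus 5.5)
The plan is to prove (i) by explicit worst-case instances, one for each term of the maximum, and to prove (ii)–(iii) by a direct equilibrium-counting argument for $\fmc$ matched against the same instances.

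\textbf{Upper bound in (i).} We fix an arbitrary mechanism $f=\{f_j\}$ and build set cover games in $\GG$ whose every PNE is poor. Normalize so that each class uses $f_j(1)=1$; if $f_j(1)\le 0$ for some class, the trivial instances give pPoA $0$ or $1/n$. The isolated and blind agents observe nobody else, so their utilities are $\sum_{r\in a_i}v_r f_j(1)$. In other words, they act as communication-denied agents and are insensitive to overlap with others.

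\emph{First family.} We put all $\kappa$ isolated and blind agents, plus one normal agent, on a common resource $r_0$ of value $1$. Each of the $\kappa$ special agents also gets an alternative private resource of value $1$ (with tie-breaking handled by an $\epsilon$ perturbation). The special agents are indifferent, so the profile in which everyone stacks on $r_0$ is a PNE. A normal agent is given a private alternative whose value is tuned to $f_{\rm normal}(\cdot)$ of the count it observes. Because normals observe blinds but not isolated agents, we have to check which counts they see; this is the step where that asymmetry matters.

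The PNE then covers $1$, while the optimum covers $1+\kappa$, giving the bound $1/(1+\kappa)$. The worst of this instance and the standard $n$-agent instance (all agents stacked on one resource versus $n$ distinct resources, valid for any $f$ by taking values equal to $f(\text{count})$) gives $\max\{1/(1+\kappa),1/n\}$. Actually the required bound is the \emph{max}, so we need both instances to yield at most that max. The standard instance gives $1/n$, which is at most the max. For the first instance we must ensure that the ratio is at most $1/(1+\kappa)$ whenever $1+\kappa\le n$, which always holds.

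\textbf{Achievability (ii).} Under $\fmc$ for set cover, an agent gets $v_r$ only if no observed agent other than itself is on $r$. Take any PNE $\aa$ and optimum $\aopt$. For each normal agent $i$, consider deviating to $\aopt_i$. The utility at equilibrium satisfies $U_i(\aa)\ge \sum_{r\in\aopt_i,\ r \text{ uncovered by observed others}} v_r$. Every resource in $\aopt_i$ that is not covered in $\aa$ contributes, so the uncovered value of $\aopt_i$ is bounded by $U_i(\aa)$, and the sum of normal utilities is at most $W(\aa)$, since each covered resource is credited to at most one normal agent (normals observe each other).

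Blind agents observe nothing, so they play to maximize $\sum v_r$ over their own actions; hence $\sum_{r\in a_i}v_r\ge \sum_{r\in \aopt_i}v_r$. Their own chosen resources are covered, which gives value at least their optimal value each, but possibly overlapping. We bound the total contribution by $\kappa W(\aa)$. Combining these gives
\[
W(\aopt)\le W(\aa)+\kappa W(\aa)=(1+\kappa)W(\aa).
\]
The key point when $\kappa_1>0$ is that the normal agents see the blinds, so blind-covered resources do not double count against normals. When $\kappa_1=0$, isolated agents are unseen; normal agents may then duplicate resources held by isolated agents, costing an extra $W(\aa)$ and giving $1/(2+\kappa)$.

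\textbf{Exact value (iii).} For the $\kappa_1=0$ case we also need a tight instance for $\fmc$. An isolated agent and a normal agent share a resource that the normal agent cannot see is taken, which adds the extra unit and shows $1/(2+\kappa)$ is attained. We also cap the bound by $1/n$ via the trivial bound $W(\aopt)\le nW(\aa)$, since each agent in a PNE covers at least as much value as its optimal action's best resource. I expect the main obstacle to be the careful double-counting in the achievability bound, specifically showing that the overlaps among blind and isolated agents cost at most one $W(\aa)$ each, not more.
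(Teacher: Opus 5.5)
Your part (i) is sound in substance: it is the explicit-instance version of the paper's dual constraint at $a_1=\dots=a_\kappa=1$, $x_{\kappa+1}=1$. The achievability step in (ii), however, does not go through as written. Write $C=\bigcup_i a_i$ and $v(S)=\sum_{r\in S}v_r$. You have three estimates. Optimal resources already in $C$ contribute at most $W(\aa)$. For a normal agent $i$, $v(\aopt_i\setminus C)\le U_i(\aa)$, with $\sum_{i\ \mathrm{normal}}U_i(\aa)\le W(\aa)$. For a blind or isolated agent $j$, $v(\aopt_j)\le v(a_j)\le W(\aa)$. Added honestly, these give $W(\aopt)\le(2+\kappa)W(\aa)$; your displayed combination silently drops one $W(\aa)$. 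Note that $2+\kappa$ is exactly the $\kappa_1=0$ value, so this accounting cannot separate (ii) from (iii). The obstacle you flag, overlaps among blind and isolated agents, is not where the loss is, since the per-agent bound $v(\aopt_j)\le v(a_j)$ is already fine.

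The missing idea is to bound the normal and special credits \emph{jointly, resource by resource}. Write $\sum_{i\ \mathrm{normal}}U_i(\aa)+\sum_{j\ \mathrm{special}}v(a_j)=\sum_{r\in C}v_r c_r$. Here $c_r$ is the indicator that $r$ is held by exactly one normal agent and no blind agent, plus the number of blind or isolated agents on $r$. Normal agents observe the blind ones, so whenever that indicator equals $1$, no blind agent is on $r$. Hence $c_r\le\max\{\kappa,1+\kappa_2\}$, which equals $\kappa$ when $\kappa_1\ge 1$. Then $W(\aopt)\le v(C)+\sum_{r\in C}v_r c_r\le(1+\kappa)W(\aa)$, and for $\kappa_1=0$ the same computation gives $2+\kappa$. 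This per-resource inequality is the primal content of the paper's proof, which uses dual multipliers $\lambda_j^*=1$ (Lemma~\ref{lem_opt_lmd_sc_mc}) and a case split on whether the normal class's term $a_l f_l(A_{t,l})$ is active.

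Three smaller repairs are also needed.
\begin{itemize}
\item For (iii), your tight instance is underspecified. With the $\kappa$ isolated agents and one normal agent $i_1$ stacked on $r_0$, each with a private value-$1$ alternative, the optimum is only $1+\kappa$. Reaching $2+\kappa$ requires a second normal agent $i_2$ that is idle at the equilibrium and whose optimal action is $\{r_0\}$; its deviation earns $\fmc(2)=0$ because it sees $i_1$. So the instance needs $\kappa\le n-2$.
\item For $\kappa\ge n-1$ with $\kappa_1=0$, the claimed value is $1/n$. The bound $W(\aopt)\le nW(\aa)$ holds there because every agent, including the lone normal one, observes only itself and so satisfies $v(a_i)\ge v(\aopt_i)$. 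Your ``trivial'' justification is not valid for normal agents in general, since they assign no value to optimal resources covered by agents they observe.
\item In (i), simply give the normal agent the single action $\{r_0\}$. It is the agent that keeps $r_0$ covered in the optimum, so no private alternative is needed; the ``tuned'' value $f(\cdot)$ you propose could even be negative. The blind/isolated asymmetry then plays no role in that bound, and the all-isolated instance is needed only when $\kappa=n$.
\end{itemize}
With these fixes, your route of explicit instances plus per-resource charging is a valid and more elementary alternative to the paper's LP-duality proof.
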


\vspace{3mm}

\revtwo{For a network with blind, isolated and normal agents, the existence of PNE is guaranteed for any utility generating mechanism. By above theorem, for such a network with at least one blind agent, \emph{the marginal contribution utility emerges to be optimal utility generating mechanism leading to the price of anarchy in \eqref{eqn_opt_poa_bl_is_sc}}. In fact} \cite[Theorem 2]{Grimsman2020} derives the PoA at marginal contribution for set cover games for the network with blind, isolated and normal agents, \revtwo{which matches with the PoA of \eqref{eqn_opt_poa_bl_is_sc} and  \eqref{eqn_mc_poa_bl_is_sc} in respective cases}. Thus, our present theorem completes the characterization from~\cite{Grimsman2020} by further proving that there is no other mechanism that can perform better than marginal contribution in set cover games with blind, isolated and normal agents.  \hide{However, one can derive the exact value of optimal PoA and optimal utility generating mechanism using Theorem \ref{thm_opt_lp_gen} of Section \ref{sec_arbitrary_games}.}

\subsubsection{Set Cover Games with General Networks}
We now present a result  that is applicable to a more general network, and depicts the robustness of the game-theoretic approach to the communication failure provided that a certain level of `connectedness' in the information graph $\infocN$ is maintained  (proof in Appendix \ref{appen_remaining_proofs}).

\begin{theorem}\label{thm_clique}
     Consider a set cover game with $w(\cdot)$ as in \eqref{eqn_def_set_cov} and an information network $(\C,\infocN)$ such that \revtwo{out of any two agents at least one observes the other but not all agents
can see all other agent. In other words,}  for every $\C_j \in \C$ and $\C_l \in \C$, either $\C_j \in \infoc{l}$ or $\C_l \in \infoc{j}$ or both. Further, there exist classes $\C_l \in \C$ and $\C_p \in \C$ such that $\C_p \not \in \infoc{l}$. \revtwo{Then the  marginal contribution utility mechanism $\fmc:= \{\fmc_j\}_{j\in[k]}$ obtains the PoA $\frac{1}{2}$.}
     \end{theorem}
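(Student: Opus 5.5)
Theorem~\ref{thm_set_cov_com_fail} already gives $\poamc \le \frac12$, because the hypothesis guarantees classes $\C_l,\C_p$ with $\C_p\not\in\infoc{l}$. The whole task is therefore the matching lower bound $\poamc \ge \frac12$. For the set cover basis function \eqref{eqn_def_set_cov}, marginal contribution reads $\fmc_j(1)=1$ and $\fmc_j(x)=0$ for $x\ge 2$. An agent $i\in\C_j$ therefore earns $v_r$ from a resource $r\in a_i$ exactly when no other agent in $\infoc{j}$ selects $r$, and earns nothing otherwise.

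I will use a smoothing-type charging argument. Fix a game in the class, a PNE $\aa$, and an optimal profile $\aopt$. For each agent $i$, the equilibrium condition gives $U_i(\aa)\ge U_i(\aopt_i,a_{-i})$. Summing over all agents yields
\[
\sum_i U_i(\aa)\ \ge\ \sum_i \sum_{r\in \aopt_i} v_r\,\indc{\text{no agent of } \infoa{i}\setminus\{i\} \text{ selects } r \text{ in } \aa}.
\]
I then bound each side against the welfare.

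For the left side, I will show $\sum_i U_i(\aa)\le W(\aa)$. This is where the hypothesis that of any two agents at least one observes the other is used. Suppose $r$ is covered in $\aa$ by agents $i_1,\dots,i_m$. Agent $i_t$ earns $v_r$ only if it observes none of the other $i_s$. If two of them both earned $v_r$, then neither would observe the other, which the hypothesis forbids. So each covered resource contributes at most $v_r$ to the total utility, and $\sum_i U_i(\aa)\le W(\aa)$.

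For the right side, take any resource $r$ covered in $\aopt$ and pick one agent $i$ with $r\in\aopt_i$. If $r$ is not covered in $\aa$, the indicator equals $1$, so this term contributes $v_r$. Hence the right side is at least $\sum_{r\in\cup\aopt_i,\ r\notin\cup a_i} v_r \ge W(\aopt)-W(\aa)$. Combining the two bounds gives $W(\aa)\ge W(\aopt)-W(\aa)$, that is, $W(\aa)\ge \frac12 W(\aopt)$. This holds for every PNE of every game in $\GG$. Games with no PNE contribute ratio $1$ by the paper's convention, so they do not affect the bound.

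The step I expect to need the most care is the left-side inequality $\sum_i U_i(\aa)\le W(\aa)$. I must check the case where several agents from the same class, or from mutually observing classes, share a resource. In that case none of them earns anything, which is consistent with the bound. The only risk is a pair of agents who cannot observe each other, and the hypothesis excludes exactly that. I also need to check the convention $\fmc_j(0)=0$ together with the fact that $i\in\infoa{i}$, so that $|\aa|^{\infoc{j}}_r\ge 1$ whenever $r\in a_i$ and the utility is well-defined. Once the lower bound is established, combining it with Theorem~\ref{thm_set_cov_com_fail} gives exactly $\poamc=\frac12$.
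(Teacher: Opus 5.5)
Your proof is correct, and it takes a different route from the paper's. The paper argues entirely inside its dual LP of Theorem~\ref{thm_dual_lp_gen}. It uses Lemma~\ref{lem_opt_lmd_sc_mc} to set the multipliers to $\lambda_j^*=1$. It then observes that for every configuration $t\in\I_\R$ at most one index $j$ can have $a_j\indc{A_{t,j}=1}=1$; this is the configuration-level form of your observation that at most one agent can be paid $v_r$ for a resource. Hence every constraint's right-hand side is at most $2$, and $\mu^*=2$.

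Your argument is the same certificate written directly in the game. Summing the equilibrium inequalities with unit weights corresponds to the choice $\lambda_j=1$. Your two bounds combine resource by resource into $\indc{B_r\neq 0}+\sum_j[a_{j,r}\indc{A_{r,j}=1}-b_{j,r}\indc{A_{r,j}=0}]\le 2\,\indc{A_r\neq 0}$, which is precisely feasibility of $(\mu,\lambda)=(2,\mathbf{1})$ in the dual.

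What your route buys:
\begin{itemize}
\item It is elementary and self-contained for the lower bound.
\item It needs only this feasibility, not the stronger optimality claim $\lambda_j^*=1$ of Lemma~\ref{lem_opt_lmd_sc_mc}.
\item It does not depend on the LP reduction machinery (the relaxation argument and the restriction from $\I$ to $\I_\R$).
\item Both of your inequalities hold for every profile $\aa$, not only at equilibria. So this is a smoothness argument, and the bound $\frac12$ carries over verbatim to mixed and coarse correlated equilibria. That sidesteps the PNE-existence caveat the paper emphasizes.
\end{itemize}

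What the paper's route buys is uniformity. The same LP supplies the matching upper bound of Theorem~\ref{thm_set_cov_com_fail}, which you also rely on. It also computes exact values for networks where no clean charging argument is apparent.
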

\vspace{3mm}

From Theorem \ref{thm_set_cov_com_fail}, the \revother{pure price of anarchy} is at most half if there is even one class $\C_j$ that cannot observe the actions of agents in another class $\C_l$. However, Theorem \ref{thm_clique} implies that even if there are multiple pairs of classes $\C_j$ and $\C_l$ such that $\C_j$ cannot observe the actions of agents in $\C_l$, as long as $\C_l$ can observe the actions of agents in $\C_j$, then the \revtwo{pPoA under marginal contribution} does not degrade any further and equals $1/2$.
Figure \ref{fig_netowrk_ex}(a) presents an example of such a network with three classes.

\begin{figure}
    \centering
    \includegraphics[trim={6cm 9.7cm 6cm 6.5cm},clip,scale = 0.45]{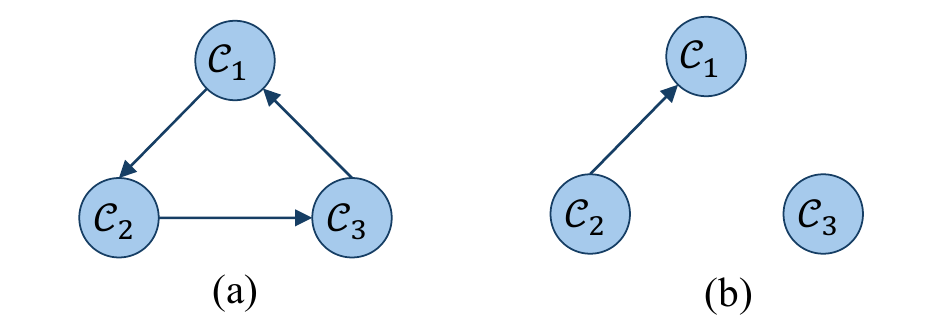}
    \caption{A network with three classes such that for every $\C_j \in \C$ and $\C_l \in \C$, either $\C_j \in \infoc{l}$ or $\C_l \in \infoc{j}$ (left). From Theorem \ref{thm_clique}, the \revtwo{PoA under marginal contribution for this network is $\frac{1}{2}$  irrespective of the number of agents in any class.} The right sub-figure has another network with three classes, such that $\N_1 = \{\C_1\}$,  $\N_2 = \{\C_1,\C_2\}$ and $\N_3 = \{\C_3\}$. \revtwo{By Theorem \ref{thm_gen_sc}, the pure PoA under marginal contribution  satisfies $\frac{1}{3} \le \poa \le \frac{1}{2}$.}}
    \label{fig_netowrk_ex}
\end{figure}

The next theorem  provides a result applicable to more general class of networks that do not \revtwo{have the same level of connectedness as in} Theorem \ref{thm_clique}. \revtwo{Consider the isolated classes and blind classes as generalization of isolated and blind agents, where a group of agents loses the ability to observe the actions of other agents outside the group. A \emph{blind class} is one that only
observes its own class but is observed by some other class, an \emph{isolated class} is the one that only observes its
own class and furthermore is not observed by any other class, and a \emph{normal class} is one that is
neither blind nor isolated. Formally, $\C_j$ is an isolated class if $\N_j=\C_j$ and $\C_j \not \in \N_l$ for $l\ne j$, and $\C_j$ is a blind class if $\N_j=\C_j$ and $\C_j \in \N_l$ for some $l\ne j$. For a network with blind and isolated classes, we have the following (proof in Appendix \ref{appen_remaining_proofs}):}

\begin{theorem}\label{thm_gen_sc}
\revtwo{Consider a set cover game with network $(\C,\N)$ with $k_i$ isolated, $k_b$ blind and $k_n$ normal classes. Define $k_1:=k_i+k_b$ and note that $k= |\C| = k_i+k_b+k_n$.  Then,
\begin{enumerate}[(i)]
  \item When $k_n<k$ the pure price of anarchy satisfies
    \begin{equation}
    \poa\le \max\left\{\frac{1}{1+k_1},\frac{1}{k}\right\}.
    \end{equation}
    \item If there is at least one blind class, i.e., $k_b,k_n\ge 1$, then then the  marginal contribution utility mechanism $\fmc:= \{\fmc_j\}_{j\in[k]}$ satisfies 
    \begin{equation}
    \poamc \ge \frac{1}{k} \label{eq:Theorem 5 part i}
    \end{equation}
    \item If there is no blind class, i.e., $k_b=0$ and $k_i,k_n \ge 1$ or no normal class, i.e., $k_n =0$, the  marginal contribution utility mechanism $\fmc:= \{\fmc_j\}_{j\in[k]}$ satisfies
    \begin{equation}\poamc \ge \frac{1}{1+k}.\label{eq:Theorem 5 part ii}
    \end{equation}
    \end{enumerate}}
\end{theorem}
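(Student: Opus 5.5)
The plan is to treat Theorem~\ref{thm_gen_sc} as the class-level analogue of Theorem~\ref{thm_bl_is_sc}. Collapsing each class to a single ``super-agent'' should make the class-level arguments mirror the agent-level ones. Part (i) is an upper bound, so I will construct explicit worst-case instances valid for \emph{every} mechanism $f=\{f_j\}$. Parts (ii) and (iii) are lower bounds for $\fmc$ and will come from a smoothness/equilibrium-counting argument.

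\textbf{Part (i).} Fix arbitrary $f$ and pick one representative agent per class, giving the other agents of each class a single action on a resource of value $0$. If some $f_j(1)\le 0$, a trivial instance already gives pPoA $0$, so assume $f_j(1)>0$ and normalize. For the $\frac{1}{1+k_1}$ bound, I would adapt the instance from Theorem~\ref{thm_bl_is_sc}:
\begin{itemize}
\item There is a resource $r_0$ of value $1$, selected at equilibrium by the representatives of all $k_1$ blind/isolated classes. Each of them observes only its own class, so it perceives $f_j(1)$ on $r_0$.
\item Each of these representatives also has an alternative resource $r_j$ of value $v_j$ slightly less than $f_j(1)/f_j(1)\cdot 1$, scaled so that deviating is not profitable.
\item Normal representatives either also sit on $r_0$ or sit on zero-value resources.
\end{itemize}
The equilibrium welfare is then $1$, while the optimum is about $1+k_1$, with one agent on $r_0$ and the others spread to the $r_j$. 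When $k_n\ge1$, one normal class covers $r_0$ in the optimum. When $k_n=0$, the count is $k$ agents on $k$ resources, which yields the $\frac1k$ alternative through the standard all-on-one-resource instance in which everyone stacks. The ``max'' comes from taking whichever construction is valid given the class counts, and in particular from the need for at least one blind/isolated agent to be unable to see the others.

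\textbf{Parts (ii)--(iii).} Under $\fmc$ for set cover, an agent's utility on $r$ is $v_r$ if no observed agent (other than itself) is on $r$, and $0$ otherwise. Let $\ane$ be a PNE and $\aopt$ an optimum. For each resource $r$ covered by $\aopt$ but not by $\ane$, the optimal agent $i$ choosing $r$ could deviate to it and gain $v_r$, so its equilibrium utility is at least $v_r$. That utility is earned on resources it covers and on which it sees nobody else, and each such resource is counted at most by the agents that cannot see one another. The within-class structure (condition \textbf{C}.2) guarantees that two agents of the same class never both earn utility on one resource, so at most one agent per class collects $v_r$ on a given covered resource. Summing gives
\begin{equation*}
W(\aopt)\le W(\ane)+\sum_i U_i(\ane)\le W(\ane)+c\,W(\ane),
\end{equation*}
where $c$ bounds the number of classes that can simultaneously collect utility on a single resource. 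In the general case $c\le k$, which gives $\frac{1}{1+k}$ and hence part (iii).

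For part (ii), I would refine the counting using a blind class $\C_b$ and a normal class observing it. If a normal class is on $r$ together with others it sees, then at most $k-1$ classes profit. Otherwise the coverage of $W(\ane)$ itself absorbs one term, tightening the bound to $W(\aopt)\le k\,W(\ane)$.

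\textbf{Main obstacle.} The main obstacle is this refined charging in part (ii): showing that the ``$+1$'' from $W(\ane)$ and the per-class count cannot both be tight at once when a blind class exists. I would try charging each uncovered optimal resource directly to a distinct equilibrium-covered resource and its collecting classes, arguing that the normal class that observes the blind class can never collect alongside it.
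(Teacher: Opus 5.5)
Your part (i) instance and your smoothness inequality for part (iii) are correct. They are the primal counterparts of the paper's argument, which works entirely in the dual LP of Theorem~\ref{thm_dual_lp_gen}.

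\textbf{Part (i).} The paper evaluates the constraint at the configuration $t$ with $a_j=1$ on every blind/isolated class and $x_l=1$ on one normal class (or $a_j=1$ for all $j$ when $k_1=k$). It then uses $\lambda_j\ge 1/f_j(1)$, which comes from the constraint with only $b_j=1$. Your resource $r_0$ together with the alternatives $r_j$ of value $1$ is exactly the primal solution supported on those configurations.

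\textbf{Parts (ii)--(iii).} The paper fixes $\lambda_j=1$ (Lemma~\ref{lem_opt_lmd_sc_mc}) and bounds $\indc{B_t\ne 0}+\sum_j a_j\indc{A_{t,j}=1}$ for each configuration. Summed over resources, this is your inequality $W(\aopt)\le W(\ane)+\sum_i U_i(\ane)\le (1+c)W(\ane)$. Your route has the advantage of not needing Theorem~\ref{thm_dual_lp_gen} at all. Since $c\le k$ holds in every network, your bound for (iii) holds without its case hypotheses, which matches the paper's proof.

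\textbf{A wording fix.} Apply the deviation to each agent's whole $\aopt_i$, so that $U_i(\ane)\ge\sum_{r\in\aopt_i,\,|\ane|_r=0}v_r$. Saying ``at least $v_r$'' resource by resource is not enough to justify the sum.

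\textbf{The gap is in part (ii).} You leave this part open, and the case split you give does not work as written. ``The coverage of $W(\ane)$ itself absorbs one term'' is not an argument. Also, the difficulty you describe, that the $+1$ and the per-class count might be tight simultaneously, is not the real issue.

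No charging scheme is needed. Let $\C_p$ be a blind class and $\C_l$ a class with $\C_p\in\N_l$; such a class exists, and $k\ge 2$. Fix any resource $r$:
\begin{itemize}
\item If some agent of $\C_p$ is on $r$ at $\ane$, then every agent of $\C_l$ on $r$ observes it and earns $0$ there.
\item If no agent of $\C_p$ is on $r$, then $\C_p$ earns nothing on $r$.
\end{itemize}
So at most one of $\C_p,\C_l$ collects on $r$. Hence $c\le k-1$ on every resource, and your own inequality immediately gives $W(\aopt)\le k\,W(\ane)$.

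This is exactly the paper's step: ``both terms cannot be simultaneously positive since $\C_p\in\N_l$.'' It is also the observation in your last sentence. You only need to apply it to the per-resource collector count $c$ rather than to a charging scheme.
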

\vspace{3mm}

Interestingly, in both Theorem \ref{thm_clique} and Theorem \ref{thm_gen_sc}, \revtwo{the bounds on the  PoA} do not depend upon the number of agents in various classes, and only on the number of classes in the network $(\C,\N)$. Figure \ref{fig_netowrk_ex}(b) presents an example of a network with three classes that satisfies the condition in \revother{part (ii)} of Theorem \ref{thm_gen_sc}. 

\subsection{General Submodular Games}Now we consider the case of general submodular games.  Section \ref{sec_arbitrary_games} provides a linear program (LP) that derives the \revother{pure price of anarchy  for any given basis function, utility generating mechanism and any arbitrary information network.}
This allows for numerical computation of the \revother{pure price of anarchy} even in cases when closed-form expressions are difficult to obtain.
We solve the LP and derive the \revother{pure} PoA for various submodular basis functions and present some interesting observations. In the following, submodular basis functions of the form $w(j) = j^d$ with $d \in [0,1]$ are considered; $d=0$ corresponds to set cover games, and $d>0$ covers a wide class of submodular functions. Hence, we refer to $d$ as the \emph{submodularity parameter.}

\revother{In this section, we restrict our attention to the information networks with normal, blind and isolated agents. For such games, \emph{the PNE is guaranteed to exist for any utility generating mechanism}. Hence one can optimize over utility generating mechanism $f$ in the LP in \eqref{eqn_poa_dual_lp_gen} to derive optimal utility generating mechanism and optimal price of anarchy. The non-linear terms of the form $\lambda_jf_j$ that arise upon considering $f$ as a decision variable is taken care} \revtwo{by the virtue of Lemma \ref{lem_scaling_f}, which implies that \emph{the scaling of $f_j$ with a positive coefficient doesn't alter PoA}.} \revother{See Remark \ref{remark_opt_Poa_for_bl_is} for more details. The code for computing/optimizing PoA is available at \cite{Simulation_Github}.} \revtwo{All the observations presented in the section are empirical.}

\subsubsection{Robustness} We first study the robustness of the optimal full information utility generating mechanism \revother{represented by $f^*$ from \cite[Theorem 4]{Marden}}, and show that it is quite robust to communication failures. In Figure \ref{fig_robustness_vs_bl}, a submodular basis function with a low submodularity parameter ($d=0.25$) in the left sub-figure and a high submodularity parameter ($d=0.75$) in the right sub-figure is considered. The LP in Section \ref{sec_arbitrary_games} provides the \revother{pure price of anarchy for $f^*$} and \revtwo{one can calculate optimal PoA by further optimizing over $f$ for the networks with blind/isolated agents}. The PoA when agents continue using $f^*$ even when there is a communication failure is not significantly different than the  optimal PoA specific to the information network, indicating significant robustness to communication failures. 
\begin{figure}\begin{center}
        \includegraphics[trim ={3.6cm 8cm 4cm 8cm},clip,scale = 0.5]{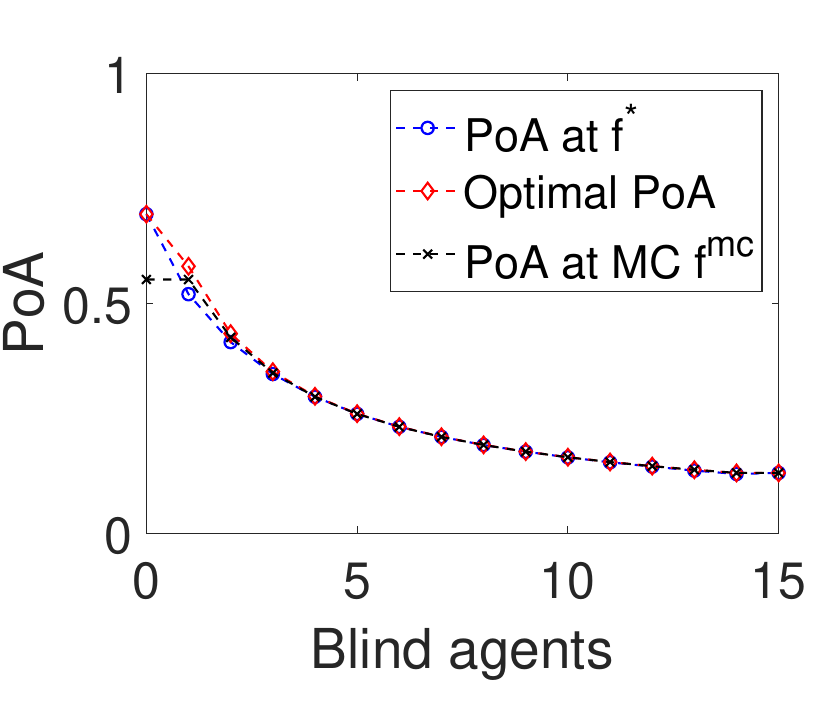}
         \includegraphics[trim ={2.9cm 8cm 4cm 8cm},clip,scale = 0.5]{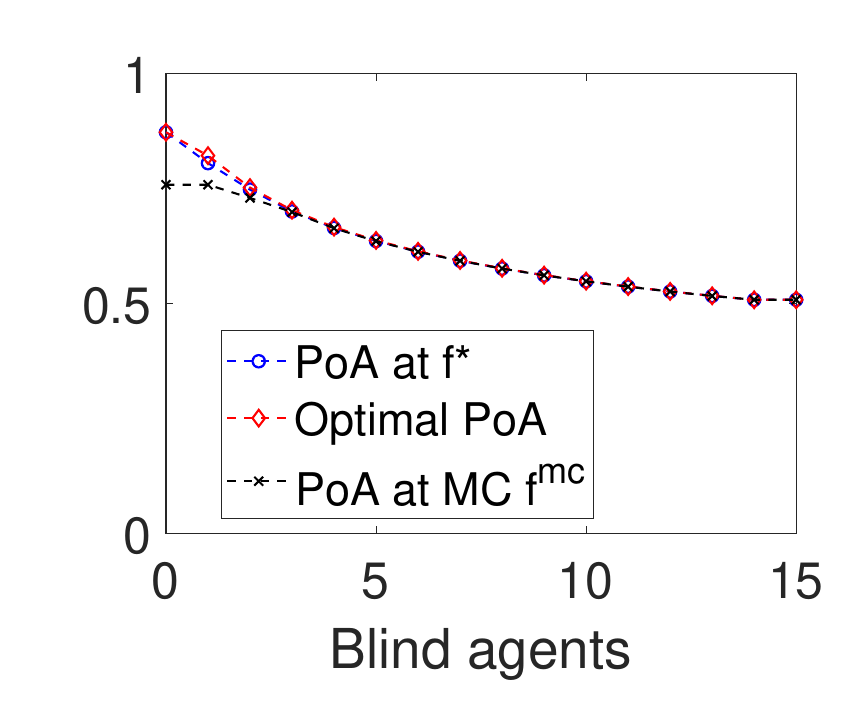}
         \end{center}
    \caption{The optimal utility generating mechanism $f^*$ corresponding to the full information case  where all the agents can observe the actions of all other agents  is `almost' optimal for any network $(\C,\N)$ with blind and normal agents. With low submodularity  parameter $d=0.25$ (left), the marginal contribution utility mechanism $\fmc$ performs better than $f^*$ but with high submodularity parameter $d=0.75$ (right) $f^*$ outperforms $\fmc$. Thus, $f^*$ is robust against communication failures.}
\end{figure}
\begin{figure}
    \label{fig_robustness_vs_bl}
       \includegraphics[trim ={3.6cm 8cm 4cm 8cm},clip,scale = 0.5]{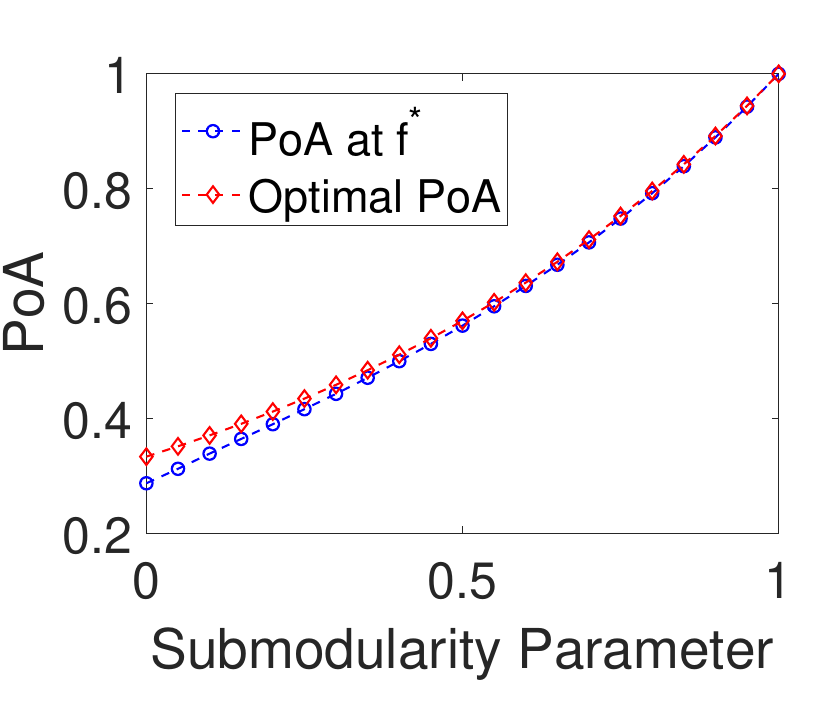}
   \includegraphics[trim ={3.15cm 8cm 4cm 8cm},clip,scale = 0.5]{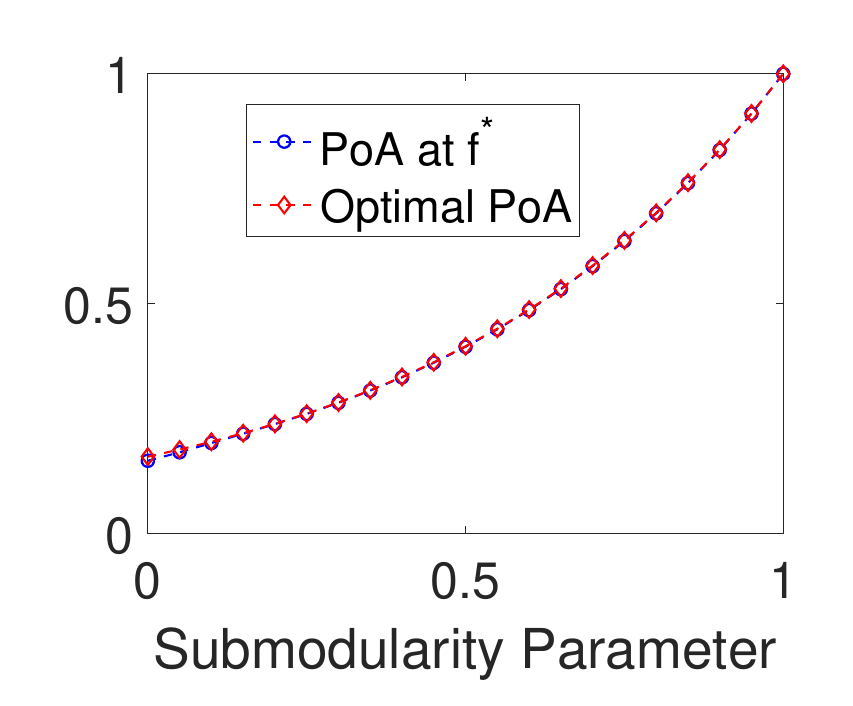}
    \caption{The PoA at optimal utility generating mechanism  $f^*$  corresponding to the full information case where all agents can observe the actions of all other agents converges to the optimal PoA for a network $(\C,\N)$ with $2$ blind and $13$ normal agents as submodularity parameter increases (left). When the number of blind agents increases to $5$, the PoA at $f^*$ and optimal PoA become indistinguishable (right) for any value of submodularity parameter. This depicts the robustness of $f^*$ to communication failure irrespective of the submodular basis function $w$ under consideration.}
    \label{fig_robustness_vs_d}
\end{figure}

Figure~\ref{fig_robustness_vs_bl} also depicts the PoA of the marginal contribution utility; when the submodularity parameter is low, marginal contribution performs very close to the optimal mechanism. 
This appears to be because a low submodularity parameter is ``close'' to a set cover game where marginal contribution has been proved to be optimal. 
When the submodularity parameter is higher, the performance of the marginal contribution utility degrades.
However, with more blind agents, the marginal contribution and $f^*$ converge to the optimal PoA.
Interestingly, the PoA with high submodularity parameter remains more than half even when all the agents become blind as opposed to set cover games, where even one communication failure leads to an optimal PoA of less than half. 

In Figure \ref{fig_robustness_vs_d}, the number of blind agents is fixed to 2 in the left sub-figure and 5 in the right sub-figure, and we plot the PoA at $f^*$ and the optimal PoA for various submodularity parameters $d$. The PoA at $f^*$ and the optimal PoA both improve with the submodularity parameter. 
Further, it is clear that $f^*$ is robust to communication failures irrespective of the submodularity parameter. 
The gap between the optimal PoA and PoA at $f^*$ becomes negligible as more agents in the system become compromised.

Hence, our results indicate that the system designer can assign the $f^*$ utility mechanism to the agents irrespective of the information network, and the PoA remains almost optimal even if there is an unanticipated communication failure.

\begin{figure}
    \includegraphics[trim ={3.6cm 8cm 4cm 8cm},clip,scale = 0.5]{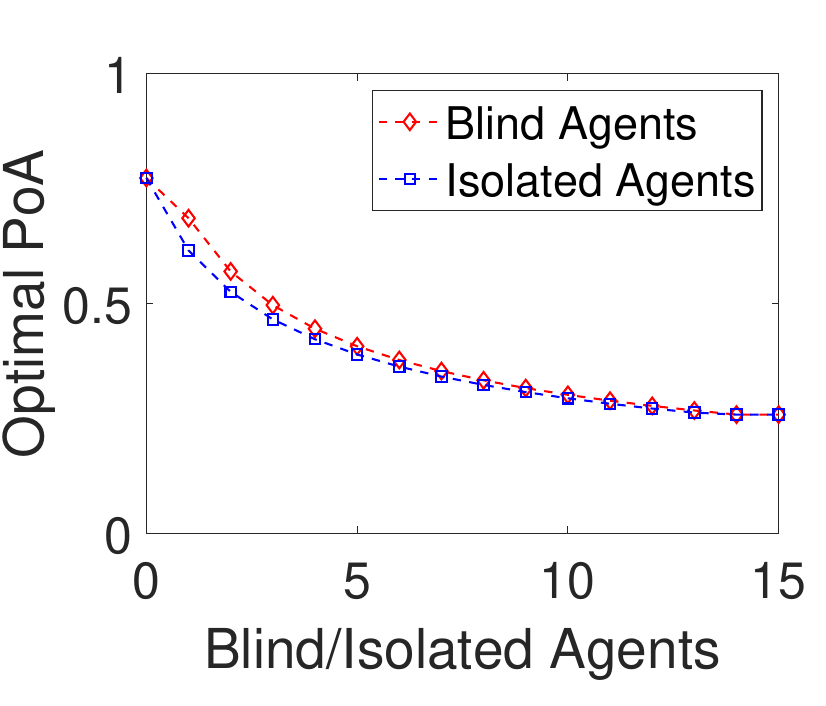}
       \includegraphics[trim ={3.1cm 8cm 4cm 8cm},clip,scale = 0.5]{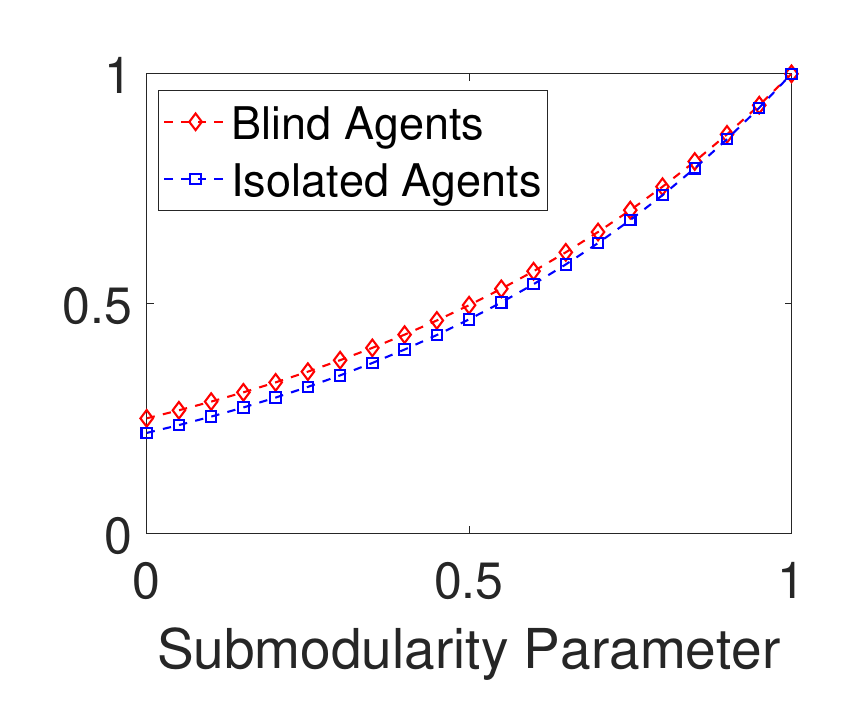}
    \caption{Optimal PoA for a network with $\kappa$ blind agents and $n-\kappa$ normal agents is better than that for a network with $\kappa$ isolated agents and $n-\kappa$ normal agents where $\kappa$ varies from $0$ to $n$  when submodularity parameter is fixed to $d=0.5$ and $n=15$  
    (left). Further, the optimal PoA is monotonically decreasing when more agents become blind or isolated in respective networks. When $\kappa$ is fixed to $3$ and submodularity parameter $d$ is varied in interval $[0,1]$, the network with $\kappa$ blind agents outperforms the network with $\kappa$ isolated agents. This implies that the optimal PoA is monotone with respect to the `connectedness' of the information network among the agents.}
    \label{fig_monotonicity}
\end{figure}

\subsubsection{Monotonicity} Now we study the optimal PoA as the communication level in the network varies; for example, we would say that a system with blind agents is more connected than a system with only isolated agents. 
\revtwo{Figure \ref{fig_monotonicity} gives numerical evidence that at least in the case of blind and isolated agents, when the network is less connected,  the  PoA decreases}. In the left sub-figure, as more agents become compromised and the network becomes less connected, the optimal PoA also degrades. Furthermore, the optimal PoA for a network with blind agents is better than that of a network with isolated agents since the network with blind agents is more connected. In the right sub-figure,  the number of compromised agents is fixed to 3 out of 15 total agents. 
The resulting plot indicates that regardless of the submodularity parameter, the network with blind agents has a better PoA than the network with isolated agents. \revtwo{This suggests that a system with higher connectivity may have better optimal performance than one with lower connectivity; however, this has not been proved and may be an interesting area for future work.}

\section{Arbitrary Games}\label{sec_arbitrary_games}
The previous sections consider the cases of supermodular and submodular games and provide various theorems deriving the \revtwo{pure price of anarchy} for specific information networks. In this section we provide a linear program (LP) in Theorem \ref{thm_dual_lp_gen} that solves \eqref{eqn_poa_gen} for any information network $(\C,\N)$, basis function $w$ and utility generating mechanism chosen by the system designer; this LP is the analytical machinery that underlies all of the technical results from Sections~\ref{sec:supermod} and~\ref{sec:submod}.
Note that the basis function in this section is no longer restricted to submodular or supermodular functions.
\hide{Then we use the LP of Theorem \ref{thm_dual_lp_gen} to develop another LP that derives the exact optimal PoA of \eqref{eqn_poa_opt} and optimal utility generating mechanism for any given network $(\C,\infocN)$ and  basis function $w$ in Theorem~\ref{thm_opt_lp_gen}.}

\subsection{Notations for LP}\label{sec_notations_lp}
We first provide the notation required to present the linear programs. Let $(\C,\N)$ be an arbitrary network with $k$ classes. Recall  $\kappa_j$ represents the number of agents in class $\C_j$, i.e., $\kappa_j = |\C_j|\ge 1$ for $j \in [k]$. Let   $\O_j:=\{l:\C_l\in \N_j\}$ represent the indices of classes whose agents can be observed by agents of $\C_j$; that is,  $\infoc{j}=\{\C_l\}_{l \in \O_j}$ for $j \in [k]$. Let $s_j$ represent the number of agents in $\N_j$, i.e., $s_j = \sum_{l\in\O_j} \kappa_l$.

\revsixteen{The forthcoming linear programs in~\eqref{eqn_poa_primal_lp_gen} (primal) and~\eqref{eqn_poa_dual_lp_gen} (dual) perform a search for pPoA-minimizing game instances parameterized by a given $(f,w,\C,\N)$.
The primal LP's decision variables encode resource values in the underlying game instances; these resources are apportioned to agent action sets in such a way that the intended equilibrium and optimal behavior is realized.
The following notation facilitates these LPs by providing a uniform method of indexing the requisite decision variables and constraints.}

Define a set $\I_\R$ \revsixteen{of \emph{action configurations} as below which will index the constraint set} of the LP deriving the $\poa$ defined in \eqref{eqn_poa_gen}.
%
\revtwo{\begin{eqnarray}\label{eqn_gen_IR}
    \I_\R  \hspace{-2mm}&:= & \hspace{-2mm}\Big\{\big(\gentuple\big): \\
   &&  (a_j,x_j,b_j) \in \mathbb{N}^3_{\ge 0},
     \ 0  \le  a_j+x_j+b_j  \le \kappa_j, \nonumber\\
    &&\hspace{0.5cm} a_jx_jb_j = 0 \mbox{ or } a_j+x_j+b_j = \kappa_j\mbox{ for } j \in [k]\Big\}.\nonumber
\end{eqnarray}}

\revsixteen{Let action configuration $t$ be defined to be $(\ttuple)$ where $t_j=(a_j,x_j,b_j)$; in the formulation of the primal LP, each $t_j$ represents a specific configuration of action choices for agents within class $\C_j$. Then a typical element of $\I_\R$ is $t \in \I_\R$. For any configuration $t \in \I_\R$ define the \emph{equilibrium coverage} $A_t$, the \emph{observable equilibrium coverage} $A_{t,j}$ and the \emph{optimal coverage} $B_t$ as follows:}
\begin{eqnarray}\label{eqn_short}
    A_t &:=& \sum_{j \in [k]} a_j + x_j, \hspace{2mm} B_t\ :=\ \sum_{j \in [k]} b_j + x_j \mbox{ and }\nonumber\\
     A_{t,j} &:=& \sum_{l \in \O_j} a_l + x_l \hspace{2mm} \mbox{ for all } j \in [k].
\end{eqnarray}
 
With all the above notations defined, we are now ready to  present the linear programs.
\begin{table}[h]
\centering
\caption{Summary of Notation}
\revsixteen{\begin{tabular}{ll}
\hline
\textbf{Symbol} & \textbf{Description} \\
\hline
$n$ & Total number of agents \\
$w$ & Basis function\\
$(\C, \mathcal{N})$ & Network with $k$ classes \\
$k$ & Number of classes \\
$\C_j$ & Set of agents in class $j$ \\
$f_j$ & Utility generating mechanism for class $\C_j$\\
$\kappa_j $ & Number of agents in class $j$ \\
$\infoc{j}$ & Set of  classes observable for class $j$ \\
$\mathcal{O}_j$ & Index of classes observable by agents of class $j$ \\
$s_j$ & Total observable agents for class $j$ \\
$\mathcal{I}_R$ & Set of action configurations~\eqref{eqn_gen_IR} \\
$t=(t_1,\dots,t_k)$ & Action configuration; element of $\mathcal{I}_R$
\\
$t_j = (a_j, x_j, b_j)$ & Agent actions in class $j$ at configruation $t$ \\
$A_t$ & Equilibrium coverage at configuration $t$\\
$B_t$ & Optimal coverage at configuration $t$\\
$A_{t,j}$ & Observable equilibrium coverage at configuration $t$\\
\hline
\end{tabular}}
\end{table}
\subsection{LP deriving  PoA for any Utility Generating Mechanism}
The next theorem  provides a linear program that characterizes $\poa$ in \eqref{eqn_poa_gen} for any information network $(\C,\N)$, any basis function $w$ and any  utility generating mechanism $f$ chosen by system designer.
\begin{theorem}[LP to characterize PoA]\label{thm_dual_lp_gen}
For a given network $(\C,\infocN)$, basis function $w$ and utility generating mechanism $f=\{f_j\}_{j\in[k]}$,
\begin{enumerate}[(i)]
 \item If $f_j(1)\le 0$ for any $j \in [k]$, then  $\poa=0$.
\item If $f_j(1)>0$ for all $j \in [k]$, then,
\begin{eqnarray*}
    \poa &=& \frac{1}{\mu^*}
\end{eqnarray*}where $\mu^*$ is the finite solution of the following LP in the unknowns $\mu\in \mathbb{R}$ and $\lambda_j\in \mathbb{R}_{\ge 0}$ for $j\in [k]$:
\begin{eqnarray}\label{eqn_poa_dual_lp_gen}
 \mu^* = && \hspace{-5mm}  \min_{\mu,\lambda_1,\dots,\lambda_k}\ \   \mu \hspace{5mm} \mbox{ subject to }\\
    &&  \hspace{-1.5cm}w(B_t) + \sum_{j \in [k]} \lambda_j  [ a_j f_j(A_{t,j})-  b_j f_j(A_{t,j}+1)]  \le  \mu  w(A_t)\nonumber\\
    && \hspace{4.7cm} \mbox{for all }t  \in \I_\R,\nonumber \\
    &&\hspace{-1.5cm}\mbox{where }f_j(0)= f_j(s_j+1) = w(0)=0 \mbox{ for all }j\in[k].\nonumber 
\end{eqnarray}  
\end{enumerate}
\end{theorem}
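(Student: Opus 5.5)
We follow the template of \cite{Marden}: reduce the worst-case game to a parameterized family over action configurations, write the resulting primal LP in the resource values, and dualize.

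\textbf{Part (i).} Suppose $f_j(1)\le 0$ for some class $\C_j$. We build a game in which the agents of $\C_j$ are the only ones that matter. Each agent $i\in\C_j$ gets action set $\{\emptyset,\{r_i\}\}$ with $v_{r_i}=1$ and the $r_i$ distinct; every other agent gets the single action $\emptyset$. Choosing $r_i$ alone yields utility $f_j(1)\le 0$, so the profile in which every $i\in\C_j$ plays $\emptyset$ is a PNE. Its welfare is $0$, while the optimum is $\kappa_j w(1)>0$. This game possesses a PNE, so it belongs to $\GG$, and hence $\poa=0$.

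\textbf{Part (ii): reduction to a primal LP.} Fix a game and a worst PNE $\ane$ together with an optimum $\aopt$. Following \cite{Marden}, we assume without loss of generality that each agent has action set $\{a_i^{\rm NE},a_i^{\rm opt}\}$. Resources are then grouped according to which agents select them in $\ane$ only, in $\aopt$ only, or in both.
\begin{itemize}
\item We aggregate resources by type. Within class $\C_j$, let $a_j$ count the agents using $r$ only in equilibrium, $b_j$ those using it only in the optimum, and $x_j$ those using it in both.
\item Because agents in a class are interchangeable and $f_j$ depends only on counts within $\infoc{j}$, a symmetrization argument (permuting agents within classes and making copies of resources) shows that only the profile $t=(t_1,\dots,t_k)$ matters. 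The constraint $a_jx_jb_j=0$ or $a_j+x_j+b_j=\kappa_j$ in \eqref{eqn_gen_IR} is exactly what this symmetrization can realize, mirroring \cite{Marden}.
\item The equilibrium count at such a resource is $A_t$, the optimal count is $B_t$, and agents of $\C_j$ observe $A_{t,j}$.
\end{itemize}

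We then sum the Nash inequalities $U_i(\ane)\ge U_i(a^{\rm opt}_i,a^{\rm NE}_{-i})$ over all $i\in\C_j$ to obtain one aggregated constraint per class:
\[
\sum_t \theta_t\,[a_j f_j(A_{t,j}) - b_j f_j(A_{t,j}+1)]\ge 0 .
\]
Here $\theta_t\ge 0$ is the total value of type-$t$ resources. Agents counted in $x_j$ contribute to both sides equally and so cancel; an agent counted in $b_j$ who deviates sees $A_{t,j}+1$ users. If $A_{t,j}=s_j$ then $b_j=0$, which justifies the convention $f_j(s_j+1)=0$.

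After normalizing $\sum_t\theta_t w(A_t)=1$, the pPoA is bounded below by $1/\max\sum_t\theta_t w(B_t)$ over this primal. Taking the LP dual gives exactly \eqref{eqn_poa_dual_lp_gen}, with multiplier $\mu$ for the normalization and $\lambda_j\ge0$ for the class equilibrium constraints. The condition $f_j(1)>0$ makes the dual feasible with finite $\mu^*$, since large $\lambda_j$ can absorb configurations with $A_t=0$ and $a_j=0$. Strong duality then yields $\poa\ge 1/\mu^*$.

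\textbf{Tightness.} For the reverse inequality we take an optimal primal solution $\theta$ and build an explicit game. For each $t$ with $\theta_t>0$ we create resources of total value $\theta_t$, replicated over all assignments of roles to agents within classes so that per-agent inequalities hold individually rather than only in aggregate. This is the circulant construction of \cite{Marden}, applied class by class. We must then check that $\ane$ is a PNE and that membership in $\GG$ (condition \textbf{G}.4) holds, which it does by construction.

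\textbf{Main obstacle.} We expect the hardest step to be this symmetrization and tightness argument for heterogeneous classes. The observed counts $A_{t,j}$ depend on the class structure through $\O_j$, and we must justify that restricting to $\I_\R$ loses nothing. We would first try replicating resources under the group $\prod_j \mathrm{Sym}(\C_j)$, verifying that aggregated per-class inequalities become per-agent inequalities.
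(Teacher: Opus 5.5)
Your part (i) construction, the per-class aggregation of the Nash inequalities, the dualization, and the tightness construction are sound and follow the paper's route. Replicating each type over all role assignments under $\prod_j\mathrm{Sym}(\C_j)$ is a valid, if larger, substitute for the paper's circulant blocks $K_j(t,l)$ with $\tilde{n}=\prod_j\kappa_j$ copies.

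The genuine gap is restricting the constraint index set to $\I_\R$. You justify the condition ``$a_jx_jb_j=0$ or $a_j+x_j+b_j=\kappa_j$'' as exactly what the symmetrization can realize, and that is false in both directions. First, your replication realizes every configuration in the full set $\I$ of triples with $a_j+x_j+b_j\le\kappa_j$. Second, and more importantly, arbitrary games contain resources whose type lies outside $\I_\R$. For example, take $\kappa_j=4$ with one class-$j$ agent using $r$ only in $\ane$, one in both, one only in $\aopt$, and one in neither. So the primal that lower-bounds \emph{every} game must be indexed by $\I$, and its dual has constraints for all $t\in\I$. That dual is not ``exactly'' \eqref{eqn_poa_dual_lp_gen}. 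The group replication you propose under ``Main obstacle'' only turns aggregate inequalities into per-agent ones; it addresses tightness, not this reduction.

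The missing idea is a dominance argument among the dual constraints. Substitute $l_j=a_j+x_j$ and $m_j=b_j+x_j$. For fixed $(l_j,m_j)_{j\in[k]}$, the quantities $A_t$, $B_t$ and every $A_{t,j}$ are fixed. The $j$-th summand then equals $\lambda_j[l_jf_j(A_{t,j})-m_jf_j(A_{t,j}+1)]+\lambda_jx_j[f_j(A_{t,j}+1)-f_j(A_{t,j})]$. Hence the left-hand side is separable and affine in $(x_1,\dots,x_k)$ on the box $\max\{0,l_j+m_j-\kappa_j\}\le x_j\le\min\{l_j,m_j\}$.

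Because $\lambda_j\ge 0$, each slope has the sign of $f_j(A_{t,j}+1)-f_j(A_{t,j})$. So for any $(\mu,\lambda)$, the most binding constraint sits at an endpoint of each interval. The lower endpoint gives $x_j=0$ or $a_j+x_j+b_j=\kappa_j$; the upper endpoint gives $a_j=0$ or $b_j=0$. Both are points of $\I_\R$. Hence every constraint with $t\in\I\setminus\I_\R$ is implied by one with $t\in\I_\R$. This is precisely the step the paper uses after strong duality to pass from the dual over $\I$ to \eqref{eqn_poa_dual_lp_gen}.
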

\vspace{3mm}

A similar LP is presented in \cite[Theorem 2]{Marden} that derives the PoA for any utility generating mechanism for a network with full information. Theorem \ref{thm_dual_lp_gen} extends that result to cater for any general information network; such an extension is non-trivial and the proof is presented in Appendix \ref{appen_lp}. \revtwo{In Appendix \ref{appen_lp}, an LP is derived that computes the PoA, and the LP in above theorem is the dual LP of the same.}  \revnine{The above LP has has $k+1$ variables and $\prod_{j=1}^k (2\kappa_j^2+2)$ constraints.}

\hide{Theorem \ref{thm_dual_lp_gen} paves the way for optimizing the price of anarchy over all utility generating mechanisms $f$ and solving \eqref{eqn_poa_opt}, which is discussed in the next section.} 

\revtwo{
\begin{remark}[Optimal PoA]\label{remark_opt_Poa_for_bl_is} 
  In an information network with blind, isolated and normal agents, the PNE is guaranteed to exist for all utility generating mechanism. Hence, one can optimize over $f$ to derive the optimal utility generating mechanism and optimal PoA. When $f$ is considered as a decision variable, the terms $\lambda_j  a_j f_j(A_{t,j})$ and $\lambda_j b_j f_j(A_{t,j}+1)$ become non-linear, however,  they can be replaced by another utility generating mechanism $\Tilde{f}_j(\cdot) := \lambda_j f_j(\cdot)$ without changing the PoA by Lemma \ref{lem_scaling_f} making the optimization problem linear again. To do so, note that any feasible $\lambda_j >0$ for all $j$: consider $t\in \I_\R$ with $b_j=1$ and all other components zero, then $\lambda_j \ge f_j(1)$, and by Theorem \ref{thm_dual_lp_gen}  part (i), $f_j(1)>0$ for non-zero PoA (zero PoA cannot be optimal). Then, the resulting LP has $\mu, \Tilde{f}$ as the decision variable. It is easy to extend the arguments in \cite[Lemma 4]{Marden} to show that the resulting LP achieves the optimal solution with bounded components.  
\end{remark}}

\section{Conclusions}

We consider the utility design problem in networked multi-agent coordination using game theory with an aim to derive optimal utility designs and optimal performance guarantees measured by the price of anarchy (PoA) metric. Closed form expressions for optimal utility design leading to optimal PoA for supermodular and submodular system objectives for various networks are derived.
For any general system objective and information network, two linear programs (LP) are provided --- (i) the first LP characterizes the exact PoA for any utility design choice and (ii) the second LP derives the optimal utility design and optimal PoA. 
\revtwo{Our work settles several open questions in utility design for networked multiagent systems (for example, optimal PoA for networked supermodular problems) and highlights several others (for example, extensions such as studying non-submodular problems through the lens of the strong price of anarchy as in~\cite{Ferguson2023}). }

\revtwo{We believe the problem of (non)-existence of pure-strategy Nash equilibrium to be of particular interest.
When networks among agents are nontrivial and PNE may not exist, classical arguments for the relevance of PNE (e.g., potential game-based arguments) lose weight.}
\revnine{In these cases, it may be more relevant to study performance guarantees in games from the standpoint of the limiting behavior of learning rules rather than static equilibrium concepts such as PNE.}
\revsixteen{In such dynamic settings, it will be interesting to see how guarantees compare to the limited static ones which we derive here.}

\revsixteen{Furthermore, future work could investigate whether our LP formulations can be used to prove closed-form bounds for other standard submodular functions, for example  the curvature parameterization used in~\cite{Filmus2014}.}
\revtwo{Finally, it would be interesting to study generally the relationship between fixed utility generating mechanisms such as $f^{\rm mc}$ and optimal mechanisms over all infomation networks.}

\appendices
\section{Proofs of Section \ref{sec_arbitrary_games}}\label{appen_lp}
We first provide another linear program (LP) in Theorem \ref{thm_primal_lp_gen} in the following \revtwo{which} is instrumental in proving Theorem \ref{thm_dual_lp_gen}. 
%
Define the following set which will facilitate the constraints of the LP in Theorem \ref{thm_primal_lp_gen}: 
%
\revtwo{\begin{eqnarray}\label{eqn_gen_I}
    \I &:= &\Big\{(\gentuple) : (a_j,x_j,b_j) \in \mathbb{N}^3_{\ge 0},\nonumber\\
    &&\hspace{1cm} 0 \ \le\  a_j+x_j+b_j \ \le\ \kappa_j \mbox{ for } j\in[k]\Big\}. \hspace{5mm}
\end{eqnarray}}\revtwo{Recall $t$ is defined to be $(\ttuple)$ where $t_j=(a_j,x_j,b_j)$}. Then a typical element of $\I$ is $t \in \I$.  Further, for any $t$ define $A_t$, $A_{t,j}$ and $B_t$ as in \eqref{eqn_short}.

\begin{theorem}[Primal LP to characterize PoA]\label{thm_primal_lp_gen}
For a given network $(\C,\infocN)$, basis function $w$ and utility generating mechanism $f=\{f_j\}_{j\in[k]}$,
\begin{enumerate}[(i)]
\item If $f_j(1)\le 0$ for any $j \in [k]$, then  $\poa=0$.
\item If $f_j(1)>0$ for all $j\in[k]$, then 
\begin{eqnarray}
    \poa &=& \frac{1}{W^*} \label{eq:poa_thm1}
\end{eqnarray}where $W^*$ is the finite \revtwo{value} of the following LP in the unknowns $\{\theta(t)\}$:
\end{enumerate}
\begin{eqnarray}\label{eqn_poa_primal_lp_gen}
   W^* &=& \max_{\theta(t)}\sum_{t\in \I} w(B_t) \theta(t) \hspace{5mm} \mbox{subject to} \nonumber\\
&& \hspace{-10mm} \sum_{t\in \I} [ a_j f_j(A_{t,j})-  b_j f_j(A_{t,j}+1)] \theta(t) \ge 0,\hspace{2mm} j \in [k]\nonumber\\
    && \hspace{-10mm}\sum_{t\in\I} w(A_t)\theta(t)=1,\nonumber\\
    && \hspace{-10mm} \theta(t) \ge 0 \ \hspace{2mm} \mbox{for all}\ \ t\in \I, \\
  \mbox{and} && \hspace{-10mm}\ f_j(0)= f_j(s_j+1) = w(0) =0, \ \forall \ j\in[k].\nonumber
\end{eqnarray}

\end{theorem}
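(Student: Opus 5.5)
The proof follows the template of \cite[Theorem 2]{Marden}: reduce the worst-case instance to a canonical parameterized family whose resource values are the LP variables $\theta(t)$. There are three steps.

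\textbf{Step 1: the case $f_j(1)\le 0$.} Take an agent $i\in\C_j$ and give it two actions, $\{r\}$ and $\emptyset$, with $v_r=1$. Give every other agent only the empty action. Since $f_j(1)\le 0$, the choice $a_i=\emptyset$ is a PNE (ties are allowed), and it yields $W=0$. Selecting $r$ yields $W=w(1)=1$. So $\poa=0$.

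\textbf{Step 2: upper bound on the pPoA, i.e.\ $W^*$ bounds the ratio.} Assume $f_j(1)>0$ for all $j$. Fix a game $G$ in $\GG$, a PNE $\ane$, and an optimal profile $\aopt$. By scaling the values, normalize $W(\ane)=1$. Each agent $i$ has equilibrium action $\ane_i$ and optimal action $\aopt_i$. For each resource $r$ and class $j$, count three numbers:
\begin{itemize}
\item $a_j$: agents of $\C_j$ selecting $r$ only in $\ane$;
\item $x_j$: agents of $\C_j$ selecting $r$ in both;
\item $b_j$: agents of $\C_j$ selecting $r$ only in $\aopt$.
\end{itemize}
This assigns each resource a type $t(r)\in\I$ with $|\ane|_r=A_t$, $|\aopt|_r=B_t$ and $|\ane|^{\N_j}_r=A_{t,j}$. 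Set $\theta(t)=\sum_{r:t(r)=t}v_r$. Then $W(\aopt)=\sum_t w(B_t)\theta(t)$ and $\sum_t w(A_t)\theta(t)=1$.

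For the remaining constraints, use the PNE condition against the deviation $\aopt_i$ for each $i\in\C_j$ and sum over $i\in\C_j$. On a resource in $\ane_i\setminus\aopt_i$, agent $i$ loses $f_j(A_{t,j})$. On a resource in $\aopt_i\setminus\ane_i$, it gains $f_j(A_{t,j}+1)$, because $i$ observes itself and so adds one to its own observed count. Summing gives
\[
\sum_t[a_jf_j(A_{t,j})-b_jf_j(A_{t,j}+1)]\theta(t)\ge 0 .
\]
The convention $f_j(s_j+1)=0$ never actually binds, since $A_{t,j}+1\le s_j$ whenever $b_j\ge1$. Hence $\theta$ is feasible and $W(\aopt)/W(\ane)\le W^*$.

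\textbf{Step 3: tightness.} Given a feasible $\theta$, I would build a game attaining the value. Following \cite{Marden}, for each $t$ create one resource per symmetric rotation of the agents inside each class, each with value $\theta(t)/\kappa_j$-type weights. Agent roles are assigned cyclically within each class, so that every agent of $\C_j$ sees the same aggregate. Each agent gets the two actions $\ane_i$ and $\aopt_i$. After symmetrization, the summed class constraint becomes the individual constraint for every agent in $\C_j$, so $\ane$ is a PNE and the family lies in $\GG$.

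I expect this construction to be the main obstacle. Unlike the full-information case, the observed count $A_{t,j}$ depends on which classes $j$ sees, but it is invariant under rotations within classes, so rotating each class independently should preserve the counts $A_{t,j}$. This needs careful verification.

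Finally, $W^*$ is finite. Each $t$ with $B_t>0$ has $\theta(t)$ controlled: if $A_t=0$, then $a_j=0$ and $b_j>0$ for some $j$, so the $j$-th constraint penalizes $\theta(t)$ by $f_j(1)>0$. This bounds those terms by the others, which are normalized. Together, Steps 2 and 3 give $\poa=1/W^*$.
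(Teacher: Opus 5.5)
Your proposal is correct and follows the paper's proof essentially step for step. It uses an explicit instance for part (i), where the paper simply cites Marden's Lemma 1. It aggregates resources by type $t\in\I$ into $\theta(t)$ and sums the deviation constraints over each class to get $\poa\ge 1/W^*$; note this is a lower bound on the pPoA, despite your Step 2 heading. It uses the same $f_j(1)>0$ boundedness argument, and for tightness it takes $\prod_l\kappa_l$ resources per type with cyclic role assignments inside each class, as in Lemma~\ref{lem_gen_relaxation_equiv}.

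To finish it, give each such resource value $\theta(t)/\prod_l\kappa_l$. Then every agent of $\C_j$ selects exactly $a_j\prod_l\kappa_l/\kappa_j$ type-$t$ resources only in $\ane$ and $b_j\prod_l\kappa_l/\kappa_j$ only in $\aopt$, so its individual deviation slack is $1/\kappa_j$ times the $j$-th LP constraint; invariance of $A_{t,j}$ under the rotations is automatic. Also justify the normalization $W(\ane)=1$ by showing $W(\ane)>0$, which follows from $f_j(1)>0$ and \textbf{A}.1 by the same argument you use for finiteness.
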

\vspace{2mm}

\textbf{Proof:} \revtwo{Part (i) follows by \cite[Lemma 1]{Marden}.} Towards part (ii), we first come up with a reduced class of games \revtwo{$\hGG\subseteq \GG$} such that PoA over class of games $\hGG$ is same as that over the class of games $\GG$. Then we re-write \eqref{eqn_poa_gen} as a constrained optimization problem  in \eqref{eqn_gen_org_ot}. Next, we consider a relaxation of \eqref{eqn_gen_org_ot} in \eqref{eqn_gen_relax_ot} and show the equivalence of \eqref{eqn_gen_relax_ot} and \eqref{eqn_poa_primal_lp_gen}. Lastly, using Lemma \ref{lem_gen_relaxation_equiv} we show that the relaxation is indeed tight, and  optimization problems \eqref{eqn_gen_org_ot} and \eqref{eqn_gen_relax_ot} are equivalent which completes the proof.

Let $\hatGG$ be the set of games that contains one  $\hat{G}$ for each $G \in \GG$ such that $\hat{G}$ matches $G$ in everything except the action sets of the agents.  The action set of any agent $i$ in $\hat{G}$ only contains the action at the worst NE of $G$ and that at an optimal action profile, that is, $\A_i = \{\ane_i,\aopt_i\}$. \revtwo{By Lemma \ref{lemStep2OfMarden},} the price of anarchy  $\poa= \frac{1}{W^*}$, where $W^*$ is the solution of following optimization problem:
\begin{eqnarray}\label{eqn_gen_org_ot}
    W^* &=& \sup_{\hat{G}\in \hGG} W(\aopt)\nonumber\\
 \mbox{s.t.} && U_i(\ane) \ge U_i(\aopt_{i},\ane_{-i})\hspace{5mm} \forall \ \ i \in N,\nonumber\\
&& W(\ane)=1. 
\end{eqnarray}To construct a linear program solving optimization problem in \eqref{eqn_gen_org_ot} consider the following relaxation:
\begin{eqnarray}\label{eqn_gen_relax_ot}
    V^* &=& \sup_{\hat{G}\in \GG} W(\aopt)\nonumber\\
 \mbox{s.t.}&& \sum_{i \in \C_j}\left[ U_i(\ane) - U_i(\aopt_{i},\ane_{-i})\right] \ge 0,\hspace{5mm} \mbox{for } j \in [k],\nonumber\\
&& W(\ane)=1.
\end{eqnarray}Further, let
\begin{enumerate}[(a)]
    \item $x_{j,r} \in \{0,[\kappa_j]\}$ be the number of agents in $\C_j$ selecting resource $r$ in both $\ane$ and $\aopt$. 
    \item $a_{j,r} + x_{j,r} \in \{0,[\kappa_j]\}$ be the number of agents in $\C_j$ selecting resource $r$ in $\ane$.
    \item $b_{j,r}  + x_{j,r}  \in \{0,[\kappa_j]\}$ be the number of agents in $\C_j$ selecting resource $r$ in $\aopt$.
    \item  $A_{r}$ be total number of agents who select resource $r$ at $\ane$, thus $A_{r}:= \sum_{j \in [k]} a_{j,r} + x_{j,r}$.
    \item  $A_{r,j}$ be the number of agents from classes $\O_j$ who select resource $r$ at $\ane$, thus $A_{r,j} := \sum_{l \in \O_j} a_{l,r} + x_{l,r}$.
    \item  $B_{r}$ be total number of agents who select resource $r$ at $\aopt$ thus  $B_{r} := \sum_{j \in [k]} b_{j,r} + x_{j,r}$.
\end{enumerate}For each $t \in \I$ of \eqref{eqn_gen_I}, define $\R(t)$ as follows:
\begin{eqnarray}
    \R(t)&=& \{r \in \R: a_{j,r}+x_{j,r} = a_j+x_j,\ x_{j,r} = x_j\nonumber \\
    &&\hspace{10mm}b_{j,r}+x_{j,r} = b_j+x_j \mbox{ for } j\in[k]\}\label{eqn_Rt}
\end{eqnarray}to be the set of all the resources that are selected by exactly $a_j+x_j$ agents from $\C_j$ at $\ane$, $b_j+x_j$  at $\aopt$ and $x_j$ agents in both $\ane$ and $\aopt$. Further define $\theta(t)$ to be the sum of the values of the resources in $\R(t)$, i.e., $\theta(t) = \sum_{r \in \R(t)} v_r$. We can use $\{\theta\}$ to define the quantities in  \eqref{eqn_gen_relax_ot}. Let $A_t,B_t,A_{t,j}$ be as in \eqref{eqn_short}, then
\begin{eqnarray}\label{eqn_w_ane}
    W(\ane) &=&  \sum_{r\in \R} v_r \ w(A_r)\\
            &=& \sum_{t\in\I} w(A_t) \sum_{r \in \R(t)} v_r = \sum_{t\in\I} w(A_t) \theta(t).\nonumber
\end{eqnarray}Using similar logic, we get
\begin{eqnarray}\label{eqn_w_aopt}
    W(\aopt) &=& \sum_{t\in\I} w(B_t) \theta(t) \hspace{5mm} \mbox{ and } \\
    \sum_{i \in \C_j} U_i(\ane) 
                                &=& \sum_{t\in\I} (a_j+x_j)f_j(A_{t,j})\theta(t).\label{eqn_U_ane}
\end{eqnarray}Lastly, $\sum_{i \in \C_j} U_i(\aopt_{i},\ane_{-i})$ equals
\begin{eqnarray}\label{eqn_U_ai_ane}
&&\hspace{-7mm}= \sum_{r \in \R} v_r[ x_{j,r}  f_j(A_{r,j})+ b_{j,r}  f_j(A_{r,j}+1)]\nonumber\\
&&\hspace{-7mm}= \sum_{t \in \I} [ x_j f_j(A_{t,j})+  b_j  f_j(A_{t,j}+1)] \theta(t).
\end{eqnarray}Thus, the optimization problem \eqref{eqn_gen_relax_ot} is equivalent to
\begin{eqnarray}\label{eqn_LP_with_sup}
   W^* &=& \sup_{\theta(t)}\sum_{t\in \I} w(B_t) \theta(t) \nonumber\\
\mbox{s.t.}\hspace{5mm}&& \hspace{-10mm} \sum_{t\in \I} [ a_j f_j(A_{t,j})-  b_j f_j(A_{t,j}+1)] \theta(t) \ge 0,\hspace{1mm} j \in [k],\nonumber\\
    && \hspace{-10mm}\sum_{t\in\I} w(A_t)\theta(t)=1,\nonumber\\
    && \hspace{-10mm} \theta(t) \ge 0 \ \hspace{5mm} \mbox{ for all }\ \ t\in \I, \\
  \mbox{and} && \hspace{-10mm}\ f_j(0)= f_j(s_j+1) = w(0) =0, \ \forall \ j\in[k].\nonumber
\end{eqnarray}To see that the maximum in above is achieved, note that the objective function is continuous in $\{\theta\}$. All $\theta(t)\ge 0$ are bounded below. The constraint on $W(\ane)=1$ implies that $\theta(t)$ is bounded for all $t \in \I$ with $A_{t} \ge 1$. For $t \in \I$ such that $A_t= 0$, the equilibrium constraint for class $\C_j$ can be re-written as,
\begin{eqnarray*}
    \sum_{\substack{t\in\I\\ A_t=0}} b_j f_j(1) \theta(t) &\le& \sum_{\substack{t\in\I\\ A_t\ge 1}} [ a_j f_j(A_{t,j}-  b_j f_j(A_{t,j}+1)] \theta(t)
\end{eqnarray*}which provides the required boundedness as $f_j(1) >0$. Thus  $\{\theta(t)\}$ belong to a compact domain and optimization problem in \eqref{eqn_gen_relax_ot} is equivalent to the linear program in \eqref{eqn_poa_primal_lp_gen}.

Clearly, $V^* \ge W^*$ since any feasible point for optimization problem \eqref{eqn_gen_org_ot} is also a feasible point for optimization problem \eqref{eqn_gen_relax_ot} and hence for \eqref{eqn_poa_primal_lp_gen}. Lemma \ref{lem_gen_relaxation_equiv} (presented next) shows that $V^* \le W^*$, obtaining the proof of Theorem~\ref{thm_primal_lp_gen}. \eop
\revtwo{
\begin{lemma}\label{lemStep2OfMarden}
     Let $\hatGG$ be the set of games that contains one  $\hat{G}$ for each $G \in \GG$ such that $\hat{G}$ matches $G$ in everything except the action sets of the agents.  The action set of any agent $i$ in $\hat{G}$ only contains the action at the worst NE of $G$ and that at an optimal action profile, that is, $\A_i = \{\ane_i,\aopt_i\}$. Then price of anarchy  $\poa$ of \eqref{eqn_poa_gen} equals $\frac{1}{W^*}$, where $W^*$ is the solution of following optimization problem:
\begin{eqnarray}\label{lem_eqn_gen_org_ot_inv}
    W^* &=& \inf_{\hat{G}\in \hGG}\frac{1}{W(\aopt)}\nonumber\\
 \mbox{s.t.} && U_i(\ane) \ge U_i(\aopt_{i},\ane_{-i})\hspace{5mm} \forall \ \ i \in N,\nonumber\\
&& W(\ane)=1. 
\end{eqnarray}
\end{lemma}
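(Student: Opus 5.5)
We prove the two inequalities between $\poa$ and $1/W^*$ separately. Both rest on the fact that a game and its reduction $\hat G$ can be compared directly.

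\textbf{Step 1: the ratio is scale-invariant.} For any $G\in\GG$, multiplying every resource value $v_r$ by a constant $c>0$ multiplies all utilities and $W$ by $c$. This leaves the set of PNE and the ratio in \eqref{eqn_poa_gen} unchanged. By \textbf{A}.2 and $f_j(1)>0$ we may restrict to games with $W(\ane)>0$. (If $W(\ane)=0$ at some PNE while $W(\aopt)>0$, the ratio is $0$; that case is handled as in part (i), giving $W^*=\infty$ in the convention $1/\infty=0$.) We can therefore normalize $W(\ane)=1$, where $\ane$ is a worst PNE of $G$. Worst PNE exist because action sets are finite and we only consider games with ${\rm PNE}(G)\neq\emptyset$. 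The ratio is then $1/W(\aopt)$.

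\textbf{Step 2: $\poa\ge 1/W^*$ direction (reduction preserves equilibrium and values).} Given $G$ with worst PNE $\ane$ and optimum $\aopt$, form $\hat G$ with $\A_i=\{\ane_i,\aopt_i\}$. The values $W(\ane)$ and $W(\aopt)$ are unchanged, since they depend only on the profiles. $\aopt$ is still optimal in $\hat G$ because $\hat\A\subseteq\A$. $\ane$ is still a PNE of $\hat G$: the only deviation available to agent $i$ is to $\aopt_i$, and its utility under the profile $(\aopt_i,\ane_{-i})$ is computed by \eqref{eqn_util_gen} exactly as in $G$, so $U_i(\ane)\ge U_i(\aopt_i,\ane_{-i})$ holds. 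Hence $\hat G$ is feasible for \eqref{lem_eqn_gen_org_ot_inv} with objective equal to $G$'s ratio, which gives $W^*\le\poa$ in the inverted form.

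\textbf{Step 3: converse (the reduced game is a legitimate member of $\GG$).} Take any feasible $\hat G\in\hGG$. It satisfies \textbf{G}.1, \textbf{G}.2 and \textbf{G}.5, and it has a PNE because $\ane$ satisfies the constraints, so \textbf{G}.4 holds. Thus $\hat G\in\GG$, and its ratio is at most $W(\ane)/W(\aopt)=1/W(\aopt)$, since the worst PNE can only be worse than $\ane$ and the optimum in $\hat G$ is at least $W(\aopt)$. Taking the infimum over $\hat G$ gives $\poa\le$ the infimum in \eqref{lem_eqn_gen_org_ot_inv}. Together with Step 2 this yields equality. Reading $W^*$ as the supremum of $W(\aopt)$ (as in \eqref{eqn_gen_org_ot}) then gives $\poa=1/W^*$.

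\textbf{Main obstacle.} The delicate point is Step 3: we must treat $\ane$ and $\aopt$ in $\hat G$ as arbitrary profiles satisfying the constraints, not as the true worst equilibrium and true optimum of $\hat G$. The key observation is the monotonicity argument above, that the true ratio of $\hat G$ is at most $1/W(\aopt)$. A secondary issue is the notational mismatch: the lemma states the objective as an infimum of $1/W(\aopt)$ but calls it $W^*$. We will resolve this by proving $\poa=\inf 1/W(\aopt)=1/\sup W(\aopt)$, consistent with \eqref{eqn_gen_org_ot}.
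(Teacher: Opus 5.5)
Your proposal is correct and follows essentially the paper's route: restrict each game to the two-action game $\hat G$ with $\A_i=\{\ane_i,\aopt_i\}$ (the paper's Lemma~\ref{lemStep1OfMarden}), rescale resource values so that $W(\ane)=1$, and read the objective as $1/\sup W(\aopt)$, which is the intended meaning. Your Step~3 is more careful than the paper, which asserts that $G$ and $\hat G$ have the same ratio even though the reduced game can acquire additional, worse equilibria; your observation that any feasible $\hat G$ lies in $\GG$ with ratio at most $1/W(\aopt)$ is exactly what makes the equality hold.
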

\noindent\textbf{Proof:} The proof follows in the exact same manner as in step 2 of the proof of \cite[Theorem 2]{Marden}. We reproduce the proof here for the sake of completion.}

\revtwo{Using Lemma \ref{lemStep1OfMarden}, it is sufficient to consider the games in $\hGG$ to derive the $\poa$. By \cite[Lemma 2]{Marden}, for every game $G\in \GG$ and hence for every game $\hat{G}\in \hGG$,  the system objective $W(\ane)>0$. Now for a fixed game $\hat{G}$, create another game $\Tilde{G}$ that matches $\hat{G}$ in every aspect except the value of resources. Each resource in game $\hat{G}$ is scaled by $W(\ane)$ of $\hat{G}$ in the game $\Tilde{G}$. Basically, a resource $r$ with value $v_r$ in game $\hat{G}$ now has value of $\frac{v_r}{W(\ane)}$ in game $\Tilde{G}$. With this scaling, the worst performing equilibrium and optimal action profile remain unchanged. However, the system objective at worst equilibrium of $\Tilde{G}$, $W(\ane)$, equals 1 using \eqref{eqn_syst_obj} while the PoA of $\hat{G}$ and $\Tilde{G}$ remain equal. Further, $\{\hat{G}\in \hGG: W(\ane)=1\}\subseteq \hGG$. Hence, following similar arguments as  in proof of Lemma \ref{lemStep1OfMarden}, it is clear that the PoA over class $\hGG$ is the same as the subclass that is constrained to $W(\ane)=1$. Thus the PoA in \eqref{eqn_poa_gen} can be computed using \eqref{lem_eqn_gen_org_ot_inv}. This completes the proof. \eop }
\revtwo{
\begin{lemma}\label{lemStep1OfMarden}
 The price of anarchy over $\hatGG$ of Lemma \ref{lemStep2OfMarden} is same as that over $\GG$, that is,
 \begin{eqnarray*}
    \poa= \inf_{G \in \hGG}\left(\frac{\min_{\aa \in {\rm PNE}(G)} W(\aa)}{\max_{\aa \in \A}W(\aa)}\right).
 \end{eqnarray*}
\end{lemma}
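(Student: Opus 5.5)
I will prove the equality by showing two inequalities between the infimum over $\GG$ (which is $\poa$) and the infimum over $\hGG$. The map $G\mapsto \hat G$ restricts each action set to $\A_i=\{\ane_i,\aopt_i\}$. Here $\ane$ is a worst PNE of $G$ (one exists by \textbf{G}.4 and finiteness of $\A$) and $\aopt$ is an optimal profile of $G$. The proof then comes down to comparing the ratio attained by $G$ with the ratio attained by $\hat G$.

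\emph{Step 1: $\hat G$ is admissible and $\ane$ is still a PNE.} The resources, values, $w$, network $(\C,\N)$ and mechanism $f$ are unchanged, so $\hat G$ satisfies \textbf{G}.1, \textbf{G}.2 and \textbf{G}.5. The utility $U_i$ in \eqref{eqn_util_gen} depends only on the profile $\aa$ and not on the action sets. Hence for every $i$, $U_i(\ane)\ge U_i(\aopt_i,\ane_{-i})$ holds in $\hat G$ because it held in $G$, where $\aopt_i$ was a feasible deviation. Since $\aopt_i$ is the only alternative action, $\ane$ is a PNE of $\hat G$, so \textbf{G}.4 holds and $\hat G\in\GG$, i.e.\ $\hGG\subseteq\GG$.

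\emph{Step 2: the ratio of $\hat G$ is at most that of $G$.} In $\hat G$ the profile $\aopt$ remains feasible and $\A$ has only shrunk. Therefore $\max_{\aa}W(\aa)$ over $\hat G$ is at most $W(\aopt)$, and it is attained at $\aopt$ because every profile of $\hat G$ is a profile of $G$. So the optimum value is unchanged. The worst PNE value in $\hat G$ is at most $W(\ane)$ because $\ane\in{\rm PNE}(\hat G)$. Hence the ratio for $\hat G$ is at most the ratio for $G$. Taking infima gives $\inf_{\hGG}\le \inf_{\GG}=\poa$.

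\emph{Step 3: the reverse inequality.} Since $\hGG\subseteq\GG$ by Step 1, every ratio attained in $\hGG$ is also attained in $\GG$, so $\inf_{\GG}\le\inf_{\hGG}$. Combining the two gives the equality.

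The main obstacle is subtle rather than computational: one must confirm that the PNE existence requirement (\textbf{G}.4) and the footnote convention (ratio $=1$ when there is no PNE) do not interfere. Step 1 handles this by exhibiting $\ane$ as an explicit PNE of $\hat G$, so $\hat G$ never falls under the convention. One must also check that the worst PNE of $\hat G$ may be even worse than $\ane$, which only helps the inequality in Step 2.
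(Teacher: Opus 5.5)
Your proof is correct and follows the same route as the paper: $\hGG\subseteq\GG$ gives one inequality, and the map $G\mapsto\hat G$ (under which $\ane$ remains a PNE and the optimal value is unchanged) gives the other. Your Step~2 is in fact the more accurate form of the paper's argument. The paper asserts that $G$ and $\hat G$ have equal ratios, but restricting action sets can create new, worse equilibria in $\hat G$, so only your inequality is guaranteed in general---and, as you note, it is all that is needed.
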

\noindent\textbf{Proof:} The proof follows in the exact same manner as in step 1 of the proof of \cite[Theorem 2]{Marden}. We reproduce the proof here for the sake of completion.}

\revtwo{By definition $\hGG \subseteq \GG$, and for any game $G \in \GG$,  there is a game $\hat{G} \in \hGG$ such that $G$ and $\hat{G}$ have the same pure price of anarchy,
\begin{eqnarray*}
    \frac{\min_{\aa \in {\rm PNE}(\hat{G})} W(\aa)}{\max_{\aa \in \A}W(\aa)} = \frac{\min_{\aa \in {\rm PNE}(G)} W(\aa)}{\max_{\aa \in \A}W(\aa)}.
\end{eqnarray*}Hence the proof. \eop
}
\begin{lemma}\label{lem_gen_relaxation_equiv}
    Consider $W^*$ as in \eqref{eqn_gen_org_ot} and $V^*$ as in \revtwo{\eqref{eqn_gen_relax_ot}}. It holds that  $V^* \le W^*$. 
\end{lemma}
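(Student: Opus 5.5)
The plan is to prove $V^*\le W^*$ constructively. I will take any feasible point of the relaxation \eqref{eqn_gen_relax_ot}, or equivalently any feasible $\{\theta(t)\}_{t\in\I}$ of \eqref{eqn_LP_with_sup}, and build a game $\hat G\in\hGG$ with the following properties: (a) it satisfies every \emph{individual} equilibrium constraint $U_i(\ane)\ge U_i(\aopt_i,\ane_{-i})$; (b) it has $W(\ane)=1$; (c) it has $W(\aopt)=\sum_t w(B_t)\theta(t)$. Taking the supremum then gives the claim. The tool is the symmetrization over cyclic shifts within each class used in \cite[Theorem 2]{Marden}, applied independently to each class $\C_j$.

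\textbf{Construction.} Label the agents of $\C_j$ as $1,\dots,\kappa_j$. For every $t\in\I$ and every tuple of shifts $\sigma=(\sigma_1,\dots,\sigma_k)$ with $\sigma_j\in\{0,\dots,\kappa_j-1\}$, create one resource $r(t,\sigma)$ with value $\theta(t)/\prod_{l}\kappa_l$. In class $\C_j$, after cyclically shifting the labels by $\sigma_j$:
\begin{itemize}
\item the first $a_j$ agents include $r(t,\sigma)$ in $\ane_i$ only;
\item the next $x_j$ agents include it in both $\ane_i$ and $\aopt_i$;
\item the next $b_j$ agents include it in $\aopt_i$ only.
\end{itemize}
This is possible because $a_j+x_j+b_j\le\kappa_j$. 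Each agent's action set is $\{\ane_i,\aopt_i\}$. Since every resource $r(t,\sigma)$ is selected by exactly $a_j+x_j$ agents of class $j$ under $\ane$ and $b_j+x_j$ under $\aopt$, the class-wise counts are unchanged. Because $A_{t,j}$ depends only on class counts of observed classes, the computations \eqref{eqn_w_ane}--\eqref{eqn_U_ai_ane} go through verbatim. This gives $W(\ane)=\sum_t w(A_t)\theta(t)=1$ and $W(\aopt)=\sum_t w(B_t)\theta(t)$.

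\textbf{Per-agent constraint.} Fix $i\in\C_j$ and compute the unilateral deviation resource by resource. On a resource where $i$ is an "$a$"-type agent, $i$ earns $f_j(A_{t,j})$ in $\ane$ and nothing after deviating. On an "$x$"-type resource, $i$ earns $f_j(A_{t,j})$ before and after. On a "$b$"-type resource, $i$ earns $0$ before and $f_j(A_{t,j}+1)$ after, because $i$ observes its own class and so sees one more selector. Hence the gain of $i$ from deviating is $\sum_{t,\sigma}\frac{\theta(t)}{\prod_l\kappa_l}\bigl[\indc{i \text{ is } a\text{-type}}f_j(A_{t,j})-\indc{i \text{ is } b\text{-type}}f_j(A_{t,j}+1)\bigr]$. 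As $\sigma_j$ ranges over all $\kappa_j$ shifts, agent $i$ is $a$-type in exactly $a_j$ of them and $b$-type in exactly $b_j$ of them, for each fixed $t$ and fixed other shifts. Therefore $U_i(\ane)-U_i(\aopt_i,\ane_{-i})=\frac{1}{\kappa_j}\sum_t[a_jf_j(A_{t,j})-b_jf_j(A_{t,j}+1)]\theta(t)\ge0$ by the class constraint of \eqref{eqn_LP_with_sup}. So $\ane$ is a PNE of $\hat G$, and \textbf{G}.4 holds.

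\textbf{Remaining details and the main obstacle.} The main delicate point is confirming that $\ane$ is the \emph{worst} PNE, or, when it is not, that this only helps. The worst PNE has welfare at most $W(\ane)=1$ and the optimum is at least $W(\aopt)$, so $\hat G$'s ratio is at most $1/\sum_t w(B_t)\theta(t)$. Rescaling as in Lemma \ref{lemStep2OfMarden} then yields a feasible point of \eqref{eqn_gen_org_ot} with objective at least $\sum_t w(B_t)\theta(t)$. The other thing to check is that the observation structure ($i$ observes exactly the classes in $\O_j$, including its own) is consistent with the counts used; this follows from \textbf{C}.2--\textbf{C}.3. Combining these, every feasible value of the relaxation is attained or exceeded in \eqref{eqn_gen_org_ot}, so $V^*\le W^*$.
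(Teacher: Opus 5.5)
Your proposal is correct and follows essentially the same route as the paper. Both arguments split each $\theta(t)$ into $\prod_j\kappa_j$ equal-valued resources and assign them cyclically within each class, so that every resource has the prescribed class-wise counts and each agent of $\C_j$ carries exactly a $1/\kappa_j$ share of the aggregated class constraint. The differences are only presentational: you index resources by shift tuples rather than the paper's blocks $K_j(t,l)$, and you check each agent's deviation directly rather than through the function $G_j$ of Lemma~\ref{lem_G_func}. Your explicit remark that $\ane$ need not be the worst PNE (which can only lower the ratio) makes precise a point the paper leaves implicit.
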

\noindent\textbf{Proof:} Let  $\{\theta(t)\}$ be any feasible solution of LP \eqref{eqn_poa_primal_lp_gen} with objective value $V$. We construct a game instance $G$ that satisfies the constraints of \eqref{eqn_gen_org_ot} with the same objective value. This will conclude $V^* \le W^*$. 

Let $\n := \prod_{j=1}^k \kappa_j$. \revtwo{Let ${\cal R}=\I\times [\n]$.  For each $(t,q)\in {\cal R}$, construct  resources $r(t,q)$ with value $v_{r(t,q)}=\frac{\theta(t)}{\n}$}. For each $j\in [k]$, $t \in \I$ \revtwo{ and $l \in [\kappa_j]$} define the following sets:
\revtwo{\begin{eqnarray*}
 K_j(t,l) &=&  \left\{r_{(t,q)}: q=(l-1) \frac{\n}{\kappa_j}+1,\dots,l \frac{\n}{\kappa_j} \right\} \\
\end{eqnarray*}}%
\revtwo{Add sets $ K_j(t,l)$ to the action sets of agents in class $\C_j$ in the following manner. For fixed $t$, arrange $\kappa_j$ sets $K_j(t,l)$ in a circle indexed by $l \in[\kappa_j]$  in clockwise direction. Then, order the agents in class $\C_j$ from $1$ to $\kappa_j$. For the agent  in $i$-th position, add $a_j+x_j$ sets to the equilibrium action $\ane_{ij}$  starting from set  $K_j(t,i)$ moving clockwise. For optimal action $\aopt_{ij}$, add $b_j+x_j$ sets starting from $K_j(t,1+(i-1-b_j)\mod \kappa_j)$ moving clockwise\footnote{With slight abuse of notation, we use $\ane_{ij}$ and $\aopt_{ij}$ to represent the action set of player at the $i$-th position in $\C_j$, whereas this player may have been assigned a different label from $[n]$ in the original problem.}. Formally,}
\revtwo{\begin{eqnarray*}
 &&\hspace{-7mm} \ane_{ij}=
      \{ K_j(t,l) : a_j+x_j \ge 1+ ((l-i)\hspace{-3mm}\mod \kappa_j), l \in [\kappa_j]\},  \\ 
   &&\hspace{-7mm}   \aopt_{ij} \hspace{-2mm}=
     \{ K_j(t,l) : b_j+x_j \ge 1+ (l-i+b_j)\hspace{-3mm}\mod \kappa_j,  l \in [\kappa_j]\} .  
\end{eqnarray*}}\revtwo{ This process is repeated for all $t$, all $i\in \C_j$ and all $\C_j$. For example, consider an instance with $a_j=2$, $x_j=2$, $b_j=1$, $\kappa_j=6$ and $i=2$ for some $\C_j$. Then $ a_j+x_j \ge 1+ ((l-i)\mod \kappa_j)$ is true only for $l=2,3,4$ and $5$, and $b_j+x_j \ge 1+ (l-i+b_j)\mod \kappa_j$ is true only for $l=1,2,3$. Hence, agent $i$ selects $a_j+x_j=4$ sets of resources at equilibrium, $b_j+x_j=3$ sets of resources at optimal action; and resources in $x_j=2$ sets $K_j(t,2)$ and $K_j(t,3)$ are common in equilibrium and optimal actions.} \revtwo{It is not difficult to show that this construction} results in a game where for all $t \in \I$,
\begin{enumerate}[\textbf{B}.1]
\item  Any resource $r_{(t,q)}$ for $q \in[\n]$ has a value of $\frac{\theta(t)}{\n}$.
    \item  Any  $r_{(t,q)}$ for $q \in[\n]$ is selected by exactly $a_j+x_j$  agents from $\C_j$ at $\ane$, thus $A_t$ agents (see \eqref{eqn_short}). Further, $r_{(t,q)}$ is selected by $b_j+x_j$ agents from class $\C_j$ at $\aopt$ (thus $B_t$ agents).
    \item Each agent in $\C_j$ selects $(a_j+x_j)$ sets out of $\{K_j(t,l)\}_{l \in [\kappa_j]}$ at $\ane$, each of which contains $\frac{\n}{\kappa_j}$ resources. Thus every agent selects $\frac{\n}{\kappa_j}(a_j+x_j)$ resources at $\ane$. Every agent selects $b_j+x_j$ such sets  at $\aopt$, thus selects $\frac{\n}{\kappa_j}(b_j+x_j)$ resources. Further, $x_j$ sets, and hence $\frac{\n}{\kappa_j} x_j$ resources are common in $\ane$ and $\aopt$.

\end{enumerate}%
We now prove that $W(\ane)=1$ and $W(\aopt)=V$. Observe that 
\revtwo{\begin{eqnarray}\label{eqn_wane_1}
    W(\ane) = \hspace{-1mm}\sum_{t \in\I}  \sum_{q\in [\n]}\hspace{-1mm} v_{r(t,q)} =\sum_{t\in\I} \theta(t) w(A_t)=1\hspace{2mm}
\end{eqnarray}}by the constraint in the LP in \eqref{eqn_poa_primal_lp_gen}. Similar arguments prove that $W(\aopt)=V$. 


For any $t \in \I$, the resource $r_{(t,q)}$ is selected by $A_{t,j}= \sum_{p \in \O_j} a_{p}+x_{p}$ agents from $\N_j$ for $j \in [k]$. The function \eqref{eqn_G_gen} of Lemma \ref{lem_G_func} at $\ane$ equals,
\begin{equation}\label{eqn_G_eq}
      G_j(\ane)=  \sum_{q \in [\n]} \sum_{t\in\I}  \frac{\theta(t)}{{\n}}\sum_{l=1}^{A_{t,j}}f_j(l) =\frac{1}{{\n}}\sum_{t\in\I} {\n}  \theta(t)\sum_{l=1}^{A_{t,j}}f_j(l).
\end{equation}When an agent $i \in \C_j$ unilaterally deviates from $\ane$ to $(\aopt_i,\ane_{-i})$, then each $K_j(t,l) \in \ane_i$ which is not in $\aopt_i$ would be selected one less time, and each $K_j(t,l) \in \aopt_i$ which is not in $\ane_i$ would be selected one more time.  Thus,  $a_j$ sets of the type $K_j(t,l)$ get selected by one less agent and $b_j$ sets of the type $K_j(t,l)$ get selected by one extra agent. Thus for each $t \in\I$, $\frac{\n}{\kappa_j}a_j$ resources get selected by $A_{t,j} -1$ agents from $\N_j$ and $\frac{\n}{\kappa_j}b_j$ resources get selected by $A_{t,j}+1$ agents from $\N_j$, rest are selected by $A_{t,j}$ agents from $\N_j$. Then $G_j(\aopt_i,\ane_{-i})$ equals,
\begin{eqnarray}\label{eqn_G_devi}
    &&\hspace{-0.75cm}=\ \frac{1}{{\n}}\sum_{t\in\I}  \theta(t)\Big[\frac{\n b_j}{\kappa_j}\sum_{l=1}^{A_{t,j}+1}f_j(l)+\frac{\n a_j}{\kappa_j} \sum_{l=1}^{A_{t,j}-1}f_j(l)\nonumber\\
    &&\hspace{2.5cm}+\Big({\n}-\frac{\n a_j}{\kappa_j} -\frac{\n b_j}{\kappa_j}\Big)\sum_{l=1}^{A_{t,j}}f_j(l)\Big]\\
   &&\hspace{-0.75cm}=\ \frac{1}{{\n}}\sum_{t\in\I}  \theta(t)\Big[\frac{\n b_jf_j(A_{t,j}+1)}{\kappa_j} - \frac{\n a_j f_j(A_{t,j})}{\kappa_j}+{\n}\sum_{l=1}^{A_{t,j}}f_j(l)\Big].\nonumber
\end{eqnarray}From \eqref{eqn_G_eq} and \eqref{eqn_G_devi}, $ G_j(\ane)-G_j(\aopt_i,\ane_{-i}) $ equals
\begin{eqnarray}\label{eqn_proving_eq}
    &&\hspace{-0.75cm} =\frac{1}{{\kappa_j}}\sum_{t\in\I}  \theta(t) [a_j f_j(A_{t,j}) -b_jf_j(A_{t,j}+1)] \ \ge\ 0
\end{eqnarray}by the constraint in the LP \eqref{eqn_poa_primal_lp_gen}.  Similar arguments follow for all $j\in[k]$. By Lemma \ref{lem_G_func} (presented next), $\ane$ is a Nash equilibrium, hence satisfying the constraints in \eqref{eqn_gen_org_ot}.  \eop

\begin{lemma}\label{lem_G_func}
For any $j \in [k]$, let $G_j(\cdot)$ be defined as
\begin{eqnarray}\label{eqn_G_gen}
     G_j(\aa) &=& \sum_{r \in \cup_i a_i} v_r  \sum_{l=1}^{|\aa|_r^{\N_j}} f_j(l).
\end{eqnarray}Then, $  G_j(\aa) -G_j(b_i,a_{-i}) = U_i(\aa)-U_i(b_i,a_{-i})$  for all  $ b_i \in \A_i$, $i \in \C_j $ and all $j\in[k]$.
\end{lemma}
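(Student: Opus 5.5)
The identity is local to each resource, so I would verify it resource by resource. The plan is to isolate how a unilateral change by agent $i\in\C_j$ alters the counts $|\aa|^{\N_j}_r$ that appear in $G_j$.

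Fix $j$, an agent $i\in\C_j$ and a deviation $b_i\in\A_i$. By \textbf{C}.2, agent $i$ belongs to the observed set $\N_j$ of its own class. Hence for every resource $r$, the count $|\aa|^{\N_j}_r$ changes only if $r\in a_i\triangle b_i$, and then by exactly one: a resource $r\in a_i\setminus b_i$ has $|(b_i,a_{-i})|^{\N_j}_r=|\aa|^{\N_j}_r-1$, and a resource $r\in b_i\setminus a_i$ has $|(b_i,a_{-i})|^{\N_j}_r=|\aa|^{\N_j}_r+1$. All other resources, including those in $a_i\cap b_i$, have unchanged counts. The outer sum over $\cup_i a_i$ can harmlessly be extended to all of $\R$, because a resource with count zero contributes the empty sum, consistent with $f_j(0)=0$.

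With this, $G_j(\aa)-G_j(b_i,a_{-i})$ telescopes resource by resource. Each $r\in a_i\setminus b_i$ contributes $v_r f_j(|\aa|^{\N_j}_r)$, the last term of the partial sum. Each $r\in b_i\setminus a_i$ contributes $-v_r f_j(|\aa|^{\N_j}_r+1)$.

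On the utility side, $U_i(\aa)=\sum_{r\in a_i} v_r f_j(|\aa|^{\N_j}_r)$. Likewise $U_i(b_i,a_{-i})=\sum_{r\in b_i} v_r f_j(|(b_i,a_{-i})|^{\N_j}_r)$, where for $r\in b_i\cap a_i$ the count is unchanged and for $r\in b_i\setminus a_i$ it is $|\aa|^{\N_j}_r+1$. Subtracting, the terms over $a_i\cap b_i$ cancel, leaving exactly the two collections of terms found for $G_j$.

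The only delicate point is the bookkeeping: one must use that agent $i$ is counted in its own class's observation set, and treat the boundary value correctly. The argument $f_j(s_j+1)$ never arises, since the counts stay at most $s_j$. The rest is routine algebra.
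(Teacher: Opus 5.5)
Your proof is correct and follows the paper's route. The paper writes the deviation as $b_i=(a_i\setminus R_1)\cup R_2$ and shows that both $U_i$ and $G_j$ change by $-\sum_{r\in R_1}v_r f_j(|\aa|^{\N_j}_r)+\sum_{r\in R_2}v_r f_j(|\aa|^{\N_j}_r+1)$; this is your resource-by-resource computation with $R_1=a_i\setminus b_i$ and $R_2=b_i\setminus a_i$, and you are more explicit than the paper that $i\in\N_j$ (so the counts shift by exactly one) and that the argument never exceeds $s_j$.
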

\noindent\textbf{Proof:} When an agent $i\in \C_j$ deviates to $b_i = \{a_i \backslash R_1\}\cup R_2$ where $R_1,R_2 \subset \R$, then 
\begin{eqnarray*}
   U_i(b_i,a_{-i}) = \sum_{r \in  a_i\backslash R_1} v_r f_j(|\aa|^{\N_j}_r) + \sum_{r \in  R_2} v_r f_j(|\aa|^{\N_j}_r+1)\\
                   = U_i(a_i,a_{-i}) -\sum_{r \in R_1} v_r f_j(|\aa|^{\N_j}_r) + \sum_{r \in  R_2} v_r f_j(|\aa|^{\N_j}_r+1).
\end{eqnarray*}Further, $G_j(b_i,a_{-i})$ equals,
\begin{eqnarray*}
    &=& G_j (\aa)-\sum_{r \in R_1} v_r f_j(|\aa|^{\N_j}_r) + \sum_{r \in  R_2} v_r f_j(|\aa|^{\N_j}_r+1).
\end{eqnarray*}Hence the proof.  \eop

\subsubsection*{Proof of Theorem \ref{thm_dual_lp_gen}} The dual of the LP in \eqref{eqn_poa_primal_lp_gen} equals (see  e.g.,\cite{Bertsimas})
\revtwo{\begin{eqnarray}\label{eqn_poa_dual_lp_gen_temp}
\mu^* & =&  \min_{\lambda_1\ge0,\dots,\lambda_k\ge 0,\mu\in \mathbb{R}} \ \   \mu \hspace{5mm} \mbox{ subject to }\\
    && \hspace{-15mm} w(B_t) + \sum_{j \in [k]} \lambda_j  [ a_j f_j(A_{t,j})-  b_j f_j(A_{t,j}+1)]  \le  \mu  w(A_t)\nonumber \\
    && \hspace{5cm}\ t  \in \I_\R,\nonumber \\
    && f_j(0)= f_j(s_j+1) = w(0) =0,  \forall \ j\in[k]. \nonumber
\end{eqnarray}} By strong duality  (\cite{Bertsimas}), $\mu^*$ of above equals $W^*$ of \eqref{eqn_poa_primal_lp_gen}. It remains to show that considering all $t \in \I$ of \eqref{eqn_gen_I} is equivalent to considering all $t \in \I_\R$ of \eqref{eqn_gen_IR}. 

Observe that $t \in \I$ such that $A_t=0$ or $B_t=0$ are all contained in $\I_\R$. Next, we show that the constraints corresponding to $t\in \I$  such that $A_t \ne 0$ and $B_t \ne 0$ are equivalent to the constraints corresponding to  $t\in \I_\R$. 

\revtwo{For any $t\in \I$ such that $A_t \ne 0$, $B_t \ne 0$, consider $t'$ such that $a_j'+x_j' = a_j+ x_j$ and $b_j'+x_j' = b_j+x_j$ for all $j$.}
Consider a change of coordinates from $t$ to $\Tilde{t}=(\tltuple)$ where $\Tilde{t}_j$ equals $(l_j,x_j,m_j)$ with $l_j := a_j+x_j$ and $m_j := b_j+x_j$. Then the constraint in \eqref{eqn_poa_dual_lp_gen_temp} can be re-written as
\begin{eqnarray}\label{eqn_gen_cons}
   \mu  w(A_t)
                             &\ge&w(B_t)  + \sum_{j\in[k]} \lambda_j  [l_j f_j(A_{t,j})- m_j f_j(A_{t,j}+1)\nonumber\\
                             &&\hspace{1.5cm}+ x_j[f_j(A_{t,j}+1) -f_j(A_{t,j})]] \hspace{2mm}
\end{eqnarray}for all $\Tilde{t} \in \hat{\I}$, where $  \hat{\I}$ equals
\begin{eqnarray*}
   &&\hspace{-0.75cm}\Big\{(\tltuple) : (l_j,x_j,m_j) \in \mathbb{N}^3_{\ge 0},\ 0\le l_j-x_j+m_j\le\kappa_j, \\
   &&l_j \ge x_j,\  m_j \ge x_j, \  \sum_{j \in[k]}l_j \ne 0,\  \sum_{j \in[k]}m_j \ne 0\nonumber \\
    &&\hspace{2mm}\mbox{ and }\ 1 \ \le\ \sum_{j \in[k]} (l_j-x_j+m_j)  \le \ n\Big\}.
\end{eqnarray*}In the remainder of this proof, fix $l_j$ for all $j$, and let $x_j$ and $m_j$ move freely in $\hat{\I}$. When $l_j = \kappa_j$ for all $j\in[k]$, then $m_j = x_j$ since $m_j-x_j \ge 0$ and $-x_j+m_j \le 0$.  This implies $b_j = 0$ for all $j$, and note that this point is also contained in $\I_\R$. Now consider the $\Tilde{t} \in\hat{\I}$ such that $l_j \ne \kappa_j$ for at least one $j\in[k]$, and define two sets $\R_1$ and $\R_2$ as,
\begin{eqnarray*}
    \R_1 &=& \{j \in [k]:\ f_j(A_{t,j}+1) -f_j(A_{t,j}) \le 0\},\\
    \R_2 &=& \{j \in [k]:\ f_j(A_{t,j}+1) -f_j(A_{t,j}) > 0\}.
\end{eqnarray*}For any fixed $l_j,m_j$ for $j\in [k]$, the most binding constraint in \eqref{eqn_gen_cons} arises when $x_j$ is picked as small as possible for $j \in \R_1$ (since $x_j \ge 0$ and is multiplied with a negative coefficient), and $x_j$ is picked as large as possible for $j \in \R_2$ (since $x_j \ge 0$ and is multiplied with a positive coefficient). 

For $j \in \R_1$, since $x_j \ge l_j + m_j - \kappa_j$, for fixed $m_j$ and $l_j$, the smallest value $x_j$ can take is $x_j =\max\{0,l_j+m_j-\kappa_j\}$. There are two possibilities,
\begin{enumerate}[\textbf{B}.1]
    \item when $l_j+m_j \le \kappa_j$, that is when $a_j+b_j+2x_j \le n$, then $x_j =0$ implying $a_jx_jb_j =0$,
    \item when $l_j+m_j > \kappa_j$, that is when $a_j+b_j+2x_j >\kappa_j$ then $x_j = l_j+m_j - \kappa_j$ implying $a_j+x_j+b_j = \kappa_j$.
\end{enumerate}Similarly, for $j \in \R_2$, since $x_j \le l_j $ and $x_j \le m_j$, for fixed $m_j$ and $l_j$ largest value of $x_j$ is $x =\min\{l_j,m_j\}$. Again, there are two possibilities,
\begin{enumerate}[\textbf{B}.1]
\setcounter{enumi}{2}
    \item when $l_j \le m_j$, that is when $a_j\le b_j$ then $x_j =l_j$ and thus $a_j =0$ implying $a_jx_jb_j =0$,
    \item when $l_j > m_j$, that is when $a_j > b_j$ then $x_j =m_j$ and thus $b_j =0$ implying $a_jx_jb_j =0$.
\end{enumerate}In all, for any $j \in [k]$, exactly one of the \textbf{B}.1-\textbf{B}.4 holds at the most binding constraint. Since all such quantities are included in $\I_\R$, it is sufficient to consider $\I_\R$ and LP in \eqref{eqn_poa_dual_lp_gen_temp} is equivalent to the LP \eqref{eqn_poa_dual_lp_gen}. \eop 


\begin{lemma}\label{lem_scaling_f}
    Consider any $\alpha_j>0$ for $j\in [k]$ and any utility generating mechanism $f = \{f_j\}_{j\in[k]}$. Define another utility generating mechanism $f^\alpha = \{f^\alpha_j\}_{j\in [k]}$ as $ f^\alpha_j(x) = \alpha.f_j(x)$ for all $j\in [k]$. Then for any $w$ and network $(\C,\N)$,
    \begin{equation}
      {\rm PoA}(f^\alpha,w,\C,\N) = \poa.
    \end{equation}
\end{lemma}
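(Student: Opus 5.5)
The cleanest route is to argue directly at the level of games rather than through the LP. Scaling every $f_j$ by $\alpha_j>0$ leaves the set of pure Nash equilibria of every game instance unchanged. The welfare $W$ does not depend on $f$ at all, so the ratio in \eqref{eqn_poa_gen} is the same game by game, and the infima agree. (The statement writes $\alpha.f_j$; I read this as the per-class constant $\alpha_j$, and the argument works either way.)

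The first step is to observe that there is a natural bijection between $\GG$ and ${\cal G}_{f^\alpha,w}^{(\C,\N)}$. Given $G=\langle \R,w,N,\{\A_i\},(\C,\N),f\rangle$, let $G^\alpha$ be the same tuple with $f$ replaced by $f^\alpha$. Properties \textbf{G}.1, \textbf{G}.2 and \textbf{G}.5 transfer immediately. Property \textbf{G}.4, existence of a PNE, transfers once the next step is established.

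The key step is that for any agent $i\in\C_j$, any profile $\aa$, and any deviation $\tilde a_i$,
\begin{equation*}
U_i^{\alpha}(\aa)-U_i^{\alpha}(\tilde a_i,a_{-i}) = \alpha_j\big[U_i(\aa)-U_i(\tilde a_i,a_{-i})\big].
\end{equation*}
This follows directly from \eqref{eqn_util_gen}: each term $v_r f_j(|\aa|^{\N_j}_r)$ becomes $\alpha_j v_r f_j(|\aa|^{\N_j}_r)$. The convention $f_j(0)=0$ is preserved, since $\alpha_j\cdot 0=0$. Because $\alpha_j>0$, the sign of every unilateral deviation gain is preserved, and hence ${\rm PNE}(G^\alpha)={\rm PNE}(G)$.

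Finally, $W$ and the optimal profile $\aopt$ are identical in $G$ and $G^\alpha$. Therefore the ratio $\min_{\aa\in{\rm PNE}}W(\aa)/\max_{\aa}W(\aa)$ coincides for every pair of corresponding games. This includes the degenerate no-PNE convention, where the ratio is $1$ in both. Taking infima over the bijected classes gives equality of the pPoA.

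As a consistency check, the result can also be read off the LPs. Part (i) of Theorem~\ref{thm_dual_lp_gen} is invariant, because the sign of $f_j(1)$ is preserved. In the dual LP \eqref{eqn_poa_dual_lp_gen}, the substitution $\lambda_j\mapsto\lambda_j/\alpha_j$ is a bijection of the nonnegative orthant that maps feasible points to feasible points with the same $\mu$, so $\mu^*$ is unchanged.

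There is no real obstacle. The only point needing care is confirming that the class-of-games correspondence respects \textbf{G}.4, and that is exactly what the equality of PNE sets provides.
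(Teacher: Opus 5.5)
Your proof is correct, but it takes a different route from the paper.

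The paper's one-line proof goes through the primal LP \eqref{eqn_poa_primal_lp_gen} of Theorem~\ref{thm_primal_lp_gen}. The objective $\sum_t w(B_t)\theta(t)$ and the normalization $\sum_t w(A_t)\theta(t)=1$ do not involve $f$, and each class-$j$ equilibrium constraint is homogeneous in $f_j$. So multiplying $f_j$ by $\alpha_j>0$ leaves the feasible set, and hence $W^*$, unchanged. Part (i) is covered because the sign of $f_j(1)$ is preserved.

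You instead argue at the level of game instances. You note that $U_i^\alpha=\alpha_j U_i$ for $i\in\C_j$, so ${\rm PNE}(G^\alpha)={\rm PNE}(G)$ and the welfare ratio is identical game by game. The map $G\mapsto G^\alpha$ is then a bijection between the two classes that preserves \textbf{G}.4 and the no-PNE convention.

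Your argument has several advantages:
\begin{itemize}
\item It is more elementary and self-contained: it does not rely on Theorem~\ref{thm_primal_lp_gen} (and thus on the relaxation and reconstruction lemmas behind it) or on the case split between $f_j(1)\le 0$ and $f_j(1)>0$.
\item It proves invariance game by game, which is stronger than equality of the infima.
\item It carries over unchanged to any equilibrium notion that is invariant under positive rescaling of individual utilities, such as mixed Nash or coarse correlated equilibria.
\end{itemize}

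The paper's route is shorter once the LP machinery is in place. It is also stated in exactly the form in which the lemma is later used, namely to absorb $\lambda_j$ into $f_j$ in Remark~\ref{remark_opt_Poa_for_bl_is}. Your dual-LP consistency check via $\lambda_j\mapsto\lambda_j/\alpha_j$ is essentially that same observation, made on the dual side.
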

\noindent\textbf{Proof:} The proof immediately follows from the constraint in (primal) LP \eqref{eqn_poa_primal_lp_gen} in Theorem \ref{thm_primal_lp_gen}. \eop


\section{} \label{appen_remaining_proofs}
To follow the proofs in this section, one needs familiarize themselves with \revother{the result of Theorem \ref{thm_dual_lp_gen}, although the proof of Theorem \ref{thm_dual_lp_gen} is not required in this section.} The quantities defined in Section \ref{sec_notations_lp} are also used in most of the proofs presented in the following.

\subsubsection*{Proof of Theorem \ref{thm_sup_mod}} By Lemma \ref{lem_scaling_f} in Appendix \ref{appen_lp}, we can restrict our attention to utility generating mechanism $f$ such that $f_j(1)=1$ for all $j \in [k]$.


\revtwo{Consider a $t\in \I_\R$ defined in \eqref{eqn_gen_IR} with $A_t = 0$  and $B_t \ne 0$. Such a $t$ implies that $\sum_{j \in [k]} b_j \ge 1$ by \eqref{eqn_short}. The constraint in LP \eqref{eqn_poa_dual_lp_gen} deriving $\poa$ at this $t$  equals
\begin{eqnarray}\label{eqn_lmd_w_rel_sup_mod}
    \sum_{j \in [k]} b_j \lambda_j  & \ge & w \Big( \sum_{j \in [k]}b_j\Big) \mbox{ for all } b_j \in \{0,[\kappa_j]\}.
\end{eqnarray} If $t$ is such that $b_j =\kappa_j$, the above constraint becomes
\begin{eqnarray}\label{eqn_cons_lmd_sup_mod}
    \sum_{j \in [k]} \kappa_j \lambda_j  & \ge & w (n).
\end{eqnarray}Observe that in the above, equality is achieved when $\lambda_j =w(n)/n$ for all $j$. When $k=1$, $\lambda_1\ge w(n)/n$. When $k\ge 2$, if $\lambda_l< w(n)/n$, then there must be some $\lambda_p>w(n)/n$ for the constraint to be true for $l,p \in [k]$.
Further, when $t \in \I_\R$ is such that $A_t \ne 0$ and $B_t =0$, define $\Tilde{A}_{t,j} = A_{t,j}|_{x_j =0}$, the constraint in \eqref{eqn_poa_dual_lp_gen} at this $t$ equals
\begin{equation}\label{eqn_lb_mu_sup_mod1}
    \mu\ge  \frac{1}{w\big(\sum_{j \in[k]} a_j\big)} \sum_{j \in [k]} \lambda_j a_j f(\Tilde{A}_{t,j})\mbox{ for all } a_j \in \{0,[\kappa_j]\}. \hspace{2mm}
\end{equation}
The above constraint for a $t$ with $a_j =1$ and $a_l =0$ for $l \in [k]\backslash j$ implies $\mu \ge \lambda_j$ for all $j$. Hence, 
\begin{eqnarray}\label{eqn_lb_mu_sup_mod}
    \mu & \ge &\frac{w(n)}{n} \hspace{5mm}\mbox{ thus}\\
     \poa &\le&  \frac{1}{\mu}\ \le\ \frac{n}{w(n)} \mbox{ for all } f.\label{eq:upbd_sup_mod_poa}
\end{eqnarray}}
\revtwo{Now consider the communication denied utility generating mechanism of \eqref{eqn_def_CD}; we prove that the upper bound on $\poa$ in \eqref{eq:upbd_sup_mod_poa} is achieved.}

\revtwo{Consider a $t \in \I_\R$ such that $A_t \ne 0$. 
We first prove that the constraints with $B_t \ne 0$ are implied by those with $B_t =0$. The LP constraint \eqref{eqn_poa_dual_lp_gen} when $A_t \ne 0$ and $B_t \ne 0$ equals\begin{eqnarray}\label{eqn_cons_At_bt_sup_mod}
    \mu &\ge& \frac{1}{w(A_t)} \Big(w(B_t) - \sum_{j \in [k]} \lambda_j b_j \nonumber \\
    && \hspace{3cm}+ \sum_{j \in [k]} \lambda_j a_j \indc{A_{t,j}\ne 0}\Big).
\end{eqnarray}Observe that the RHS of the above constraint is less than or equal to following: 
\begin{eqnarray}
    &&\hspace{-7mm}\le \frac{1}{w(A_t)} \Big(\sum_{j \in [k]} \lambda_j(b_j +x_j) - \sum_{j \in [k]} \lambda_j b_j \nonumber\\
    && \hspace{3.5cm}+   \sum_{j \in [k]} \lambda_j a_j \indc{A_{t,j}\ne 0}\Big),\label{eqn_ub_extra_cons2}\\
    &&\hspace{-7mm}\le \frac{1}{w(A_t)}   \sum_{j \in [k]} \lambda_j(a_j+x_j) \indc{A_{t,j}\ne 0}).\label{eqn_ub_extra_cons3}
\end{eqnarray}In the above \eqref{eqn_ub_extra_cons2} follows from  \eqref{eqn_lmd_w_rel_sup_mod} as $b_j+x_j \in \{0,[\kappa_j]\}$, and \eqref{eqn_ub_extra_cons3} follows by the fact that  $\indc{A_{t,j}\ne 0} = 1$ when $x_j\ne 0$ or $a_j \ne 0$. Further, from \eqref{eqn_lb_mu_sup_mod1} and \eqref{eqn_ub_extra_cons3}, it is clear that \eqref{eqn_lb_mu_sup_mod1} implies \eqref{eqn_cons_At_bt_sup_mod} since $a_j+x_j \in \{0,[\kappa_j]\}$.}

\revtwo{Thus, the only binding constraints are of the form \eqref{eqn_lmd_w_rel_sup_mod} and \eqref{eqn_lb_mu_sup_mod1}. In these constraints, the coefficient of $\lambda_j$ is always positive hence at optimality,  $\lambda_j=w(n)/n$ for all $j \in [k]$. Finally, the RHS \eqref{eqn_lb_mu_sup_mod1} upon substituting $\{\lambda_j\}$ becomes
\begin{eqnarray}
    &=&\frac{w(n)}{n}\frac{1}{w\big(\sum_{j \in[k]} a_j\big)} \sum_{j \in [k]} a_j\indc{\Tilde{A}_{t,j}\ne 0}\label{eqn_mu_con_sup_mod_1}\\
    &\le& \frac{w(n)}{n}\frac{ \sum_{j \in [k]} a_j }{w\big(\sum_{j \in[k]} a_j\big)} \ \le \ \frac{w(n)}{n}. \hspace{10mm}\label{eqn_mu_con_sup_mod_2}
\end{eqnarray}In the above, \eqref{eqn_mu_con_sup_mod_2} follows from the fact that $\frac{w(j)}{j}$ is an increasing function because of supermodularity. Hence by \eqref{eq:upbd_sup_mod_poa} and \eqref{eqn_mu_con_sup_mod_2} $\mu = \frac{w(n)}{n}$ which results in the PoA of $\frac{n}{w(n)}$.} \eop

\subsubsection*{Proof of Theorem \ref{thm_set_cov_com_fail}}Consider a network $(\C,\N)$ and any utility generating mechanism  $f$. If $f_j(1) \le 0$ for any $j$, by Theorem \ref{thm_dual_lp_gen} $\poa=0$ and we have nothing to prove. Now consider any $f$ such that $f_j(1)>0$ for all $j\in [k]$. From Theorem \ref{thm_dual_lp_gen}, the constraints in the LP \eqref{eqn_poa_dual_lp_gen} deriving the  PoA when $w(\cdot)$ satisfies \eqref{eqn_def_set_cov} equals
\begin{equation}\label{eqn_cons_dual_sc}
     \mu  \indc{A_t \ne 0} \ge  \indc{B_t \ne 0} + \sum_{j \in [k]} \lambda_j  [ a_j f_j(A_{t,j})-  b_j f_j(A_{t,j}+1)]. 
\end{equation}When $t$ is such that $b_j=1$ and rest of the components of $t$ are zero, \eqref{eqn_cons_dual_sc} becomes
\begin{eqnarray}\label{eqn_lb_lmd_sc}
    \lambda_j & \ge& \frac{1}{f_j(1)} \mbox{ for any }j\in[k].
\end{eqnarray}Choose another $t\in \I_\R$ such that $a_l = 1$, $x_p=1$ and rest of the components of $t$ are zero.  Then \eqref{eqn_cons_dual_sc} becomes
\begin{eqnarray}
    \mu  &\ge&  1 +  \lambda_l  f_l(1)\ \ge\ 2
\end{eqnarray}where the second inequality follows by \eqref{eqn_lb_lmd_sc}. Thus any feasible $\mu$ for LP \eqref{eqn_poa_dual_lp_gen} is at least 2 implying the optimal $\mu^* \ge 2$. Then,
\revtwo{\begin{eqnarray}\label{eqn_proof_thm_two1}
    \poa & = & \frac{1}{\mu^*} \ \le \ \frac{1}{2} \mbox{ for all } f.
\end{eqnarray}} \eop

\begin{lemma}\label{lem_opt_lmd_sc_mc}
 Let $(\mu^*,\lambda^*_1,\dots,\lambda^*_k)$ be the solution of  LP \eqref{eqn_poa_dual_lp_gen} deriving $\poa$. If $w(\cdot)$ satisfies \eqref{eqn_def_set_cov} and   $f_j = \fmc_j$ as in \eqref{eqn_def_MC} for all $j \in [k]$, then $\lambda^*_j=1$ for all $j\in[k]$.
\end{lemma}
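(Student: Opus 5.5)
The plan is to exploit the very sparse structure of $\fmc$ under the set cover basis function \eqref{eqn_def_set_cov}. Here $\fmc_j(1)=1$ and $\fmc_j(x)=w(x)-w(x-1)=0$ for $x\ge 2$; together with the convention $f_j(0)=f_j(s_j+1)=0$, the constraint \eqref{eqn_cons_dual_sc} reads
\[
\mu\,\indc{A_t\neq 0}\ \ge\ \indc{B_t\neq 0}+\sum_{j\in[k]}\lambda_j\big[a_j\indc{A_{t,j}=1}-b_j\indc{A_{t,j}=0}\big].
\]
A positive term appears only for a class $j$ with $A_{t,j}=1$ and $a_j\ge1$. This forces $a_j=1$, since $j\in\O_j$ gives $a_j\le A_{t,j}$. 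A negative term appears only for a class with $A_{t,j}=0$ and $b_j\ge 1$. Let $P(t)$ and $Q(t)$ denote these two sets of classes.

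\textbf{Step 1 (lower bound).} Exactly as in \eqref{eqn_lb_lmd_sc}, take $t$ with $b_j=1$ and all other components zero. Feasibility then forces $\lambda_j\ge 1/\fmc_j(1)=1$ for every feasible point, and in particular for the optimum.

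\textbf{Step 2 (replacing $\lambda$ by the all-ones vector).} I will show that if $(\mu,\lambda)$ is feasible with all $\lambda_j\ge1$, then $(\mu,\mathbf 1)$ is also feasible. Hence an optimal solution with $\lambda^*_j=1$ exists, with the same optimal value $\mu^*$. Fix $t\in\I_\R$ and split into two cases.

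\emph{Case $Q(t)=\emptyset$.} The right-hand side with $\lambda=\mathbf 1$ equals $\indc{B_t\ne0}+|P(t)|$. Since $\lambda_j\ge 1$, this is at most $\indc{B_t\ne0}+\sum_{j\in P(t)}\lambda_j$, which is the right-hand side under $\lambda$. So the constraint for $t$ transfers directly.

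\emph{Case $Q(t)\neq\emptyset$.} The negative part under $\lambda=\mathbf 1$ is at least $1$, so it cancels $\indc{B_t\ne 0}$. The right-hand side is therefore at most $|P(t)|\le\sum_{j\in P(t)}\lambda_j$. Now form $t''$ from $t$ by setting every $b_j=0$. Then $t''\in\I_\R$, because $a_jx_jb_j=0$ holds trivially. All $a_j,x_j$ are unchanged, so $A_{t''}=A_t$, $A_{t'',j}=A_{t,j}$ and $P(t'')=P(t)$, while $Q(t'')=\emptyset$. The $\lambda$-constraint at $t''$ therefore gives $\mu\,\indc{A_t\ne0}\ge \indc{B_{t''}\ne0}+\sum_{j\in P(t)}\lambda_j\ge|P(t)|$, which is what is needed. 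When $A_t=0$, $P(t)$ is empty and the right-hand side is $\le 0$, so the constraint holds trivially.

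\textbf{Conclusion and main obstacle.} Combining the two steps, $\lambda_j=1$ for all $j$ is attained at an optimum of \eqref{eqn_poa_dual_lp_gen}. The obstacle is that the dual constraints are not monotone in $\lambda$: lowering $\lambda_j$ shrinks the positive terms but weakens the negative ones. The auxiliary configuration $t''$ in the second case is what resolves this. The delicate point is checking that deleting the $b_j$'s preserves membership in $\I_\R$ and leaves every $A_{t,j}$ unchanged, and both facts follow because $A_{t,j}$ depends only on the $a_l,x_l$.

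If the lemma is meant as a uniqueness statement, I would add a final remark. At the optimum, the binding constraints of the type in Step 1, possibly combined with the constraint with $a_l=1$, $x_p=1$ from the proof of Theorem \ref{thm_set_cov_com_fail}, pin each $\lambda_j$ to $1$. I would only pursue this if uniqueness is actually required downstream.
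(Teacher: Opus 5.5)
Your proof is correct, and it handles the negative-coefficient constraints differently from the paper. Both proofs start from $\lambda_j\ge 1$, obtained from the configuration with only $b_j=1$.

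\textbf{The paper's route.} It argues that every constraint with $A_t\neq 0$ in which some $\lambda_p$ enters negatively is implied by the constraint at a configuration $\tilde t$. This $\tilde t$ is obtained by setting $b_p=0$ and $b_l=1$ for a class $l$ with $a_l+x_l>0$, so that $\indc{B\neq 0}$ stays $1$. It then concludes that, since the surviving constraints have nonnegative $\lambda$-coefficients, minimization forces each $\lambda_j$ to its lower bound.

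\textbf{Your route.} You show directly that $(\mu,\lambda)$ feasible implies $(\mu,\mathbf 1)$ feasible. You use $t''$, with all $b_j$ deleted, and pay for the lost $\indc{B_t\neq 0}$ with the negative term evaluated at $\lambda=\mathbf 1$.

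\textbf{What this buys you.} Your $t''$ always lies in $\I_\R$. The paper's $\tilde t$ cannot be formed when $a_l+x_l=\kappa_l$, e.g.\ for single-agent classes. Your conclusion is that some optimal solution has $\lambda^*=\mathbf 1$, so $\mu^*$ is obtained by substituting $\lambda=\mathbf 1$ into \eqref{eqn_cons_sc_mc}. That is exactly what the proofs of Theorems~\ref{thm_bl_is_sc}, \ref{thm_clique} and~\ref{thm_gen_sc} use.

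\textbf{Drop the uniqueness remark.} Your closing remark on uniqueness is false in general, so the lemma should be read in your existence form. Take two single-agent classes with $\O_1=\{1,2\}$ and $\O_2=\{2\}$, i.e.\ a normal agent observing a blind one. Enumerating the sixteen configurations, the LP reduces to $\lambda_1,\lambda_2\ge 1$ and $\mu\ge\max\{1+\lambda_2,\,\lambda_1,\,1+\lambda_1-\lambda_2\}$. Hence $\mu^*=2$, and every $\lambda_1\in[1,2]$ with $\lambda_2=1$ is optimal. The constraint $\mu\ge 1+\lambda_1-\lambda_2$ comes from $t=((1,0,0),(0,0,1))$. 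This is precisely a configuration where the paper's $\tilde t$ cannot be formed, because $b_1=1$ cannot be added when $a_1=\kappa_1=1$. So the paper's conclusion that $\lambda_j$ ``must'' take its smallest value overreaches. Your feasibility-transfer argument gives exactly what is true and what is needed.
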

\textbf{Proof:} Consider the constraint in the LP \eqref{eqn_poa_dual_lp_gen} deriving the  $\poa$ when $w(\cdot)$ satisfies \eqref{eqn_def_set_cov}  given in \eqref{eqn_cons_dual_sc} at marginal contribution utility \eqref{eqn_def_MC}:
\begin{equation}\label{eqn_cons_sc_mc}
       \mu  \indc{A_t \ne 0} \ge  \indc{B_t \ne 0} + \sum_{j \in [k]} \lambda_j  [ a_j \indc{A_{t,j}=1}-  b_j \indc{A_{t,j}=0}].
\end{equation}

For any $t$ such that $b_j=1$ for some $j\in[k]$ and rest of the components are zero, the above constraint implies
 $\lambda_j \ge 1$. This holds for all $j \in [k]$.

To prove that $\lambda_j^*=1$, we show that the only binding constraints on $\mu$ have a positive coefficient of $\lambda_j$ for all $j$. Since the LP is a minimization problem,  $\lambda_j$ must take smallest possible value at the optimal, and hence must be 1 for all $j\in[k]$.

If the number of classes $k=1$, then $A_t = A_{t,1}$ and hence any constraint involving $\mu$ (i.e., with $A_t \ne 0$) has the coefficient of $\lambda_1$ as non-negative. Thus, at optimality $\lambda_1$ takes the smallest possible value and equals $\lambda^*_1=1$.

If $k\ge 2$, then any constraint involving $\mu$ (i.e., with $A_t \ne 0$) that has negative coefficient of $\lambda_j$ for any $j\in[k]$ is implied by another constraint with positive coefficient of $\lambda_j$. To see this, fix $t$ such that $A_t \ne 0$ and say $\lambda_p$ has a negative coefficient; this implies $A_{t,p}=0$ and $b_p >0$. Note that $A_t \ne 0$ implies that $a_l+x_l>0$ for some $l\in [k]$ and $l\ne p$. Then \eqref{eqn_cons_sc_mc} at such $t$ equals
\begin{eqnarray}\label{eqn_lmd_pos1}
      \mu &\ge&1 + \lambda_l a_l\indc{A_{t,l}=1} -\lambda_p b_p\nonumber\\
      && + \sum_{j \in [k]\backslash \{p,l\}} \lambda_j  [ a_j \indc{A_{t,j}=1}-  b_j \indc{A_{t,j}=0}].
\end{eqnarray} 

Consider another $\Tilde{t} \in \I_\R$ such that $b_l=1$, $b_p=0$ and rest of the components of $\Tilde{t} $ match with those at $t$. Then \eqref{eqn_cons_sc_mc}  at $\Tilde{t}$ equals
\begin{eqnarray}\label{eqn_lmd_pos2}
   \mu &\ge&1 + \lambda_l a_l\indc{A_{t,l}=1} \nonumber\\
      && + \sum_{j \in [k]\backslash \{p,l\}} \lambda_j  [ a_j \indc{A_{t,j}=1}-  b_j \indc{A_{t,j}=0}].
\end{eqnarray}Clearly, \eqref{eqn_lmd_pos2} implies \eqref{eqn_lmd_pos1}. Similar reasoning follows for all $\lambda_j$ with negative coefficients. Hence, at optimality $\lambda^*_j=1$ for all $j\in [k]$. \eop

\subsubsection*{Proof of Theorem \ref{thm_bl_is_sc}}We first prove part (i). Consider the case when $\kappa<n$. 
Then there are $\kappa+1$ classes: $\kappa$ classes for blind and isolated agents and one class for the normal agents. Without loss of generality, let $t_j$ for $j=1,\dots,\kappa_1$ correspond to blind agents, and  for $j=\kappa_1+1,\dots,\kappa$  correspond to isolated agents. Let  $t_{\kappa+1}$ correspond to normal agents.

\revtwo{Any $f$ with $f_j(1) \le 0$ for any $j\in [k]$ has the PoA zero. Now consider $f$ with $f_j(1) >0$ for all $j$}. The constraint in the LP deriving the PoA for such $f$ is given by \eqref{eqn_cons_dual_sc}. Consider a $t\in\I_\R$ with $a_1,\dots a_\kappa=1$, $x_{\kappa+1}=1$ and rest of the components zero. Then $A_t \ne 0$  and $B_t \ne 0$ so that~\eqref{eqn_cons_dual_sc} becomes
\begin{eqnarray}\label{eqn_proof_thm_three1}
    \mu& \ge & 1 + \sum_{j\in [\kappa]}\lambda_j f_j(1)   \  \ge\ 1+\kappa
\end{eqnarray}
where the second inequality follows from \eqref{eqn_lb_lmd_sc}.  Using similar logic as in \eqref{eqn_proof_thm_two1}, we obtain
\revtwo{\begin{eqnarray}\label{eqn_proof_thm_three2}
    \poa &\le& \frac{1}{1+\kappa} \mbox{ when } \kappa<n.
\end{eqnarray}}

When $\kappa=n$, then there are $n$ classes with each class having exactly one agent who cannot observe any other agent. Consider a $t\in \I_\R$ such that $a_1,\dots a_n=1$ and the rest of the components are zero. Using similar arguments as above we obtain
\revtwo{\begin{eqnarray}\label{eqn_proof_thm_three3}
    \poa &\le& \frac{1}{n} \mbox{ when } \kappa = n.
\end{eqnarray}}
Using \eqref{eqn_proof_thm_three2} and \eqref{eqn_proof_thm_three3},
\revtwo{\begin{eqnarray}\label{eqn_proof_thm_three4}
    \poa &\le&  \max\left\{\frac{1}{1+\kappa},\frac{1}{n} \right\} \mbox{ for any } \kappa \in [n]. \hspace{8mm}
\end{eqnarray}}
\hide{The authors in \cite{Grimsman2020} prove that the upper bounds in \eqref{eqn_proof_thm_three4} is achieved at marginal contribution utility $\fmc$ defined in \eqref{eqn_def_MC}.} \revtwo{Now we prove part (ii) using our LP \eqref{eqn_poa_dual_lp_gen} in the following.}

Consider $\kappa<n$, then \eqref{eqn_cons_dual_sc} at $\fmc=(\fmc_1,\dots,\fmc_j)$ equals \eqref{eqn_cons_sc_mc}. By Lemma \ref{lem_opt_lmd_sc_mc}, at optimality in LP \eqref{eqn_poa_dual_lp_gen}, $\lambda_j^*=1$ for all $j$. Thus, the optimal $\mu^*$ satisfies (let $l:=\kappa+1$),
\begin{eqnarray}\label{eqn_proof_thm_three5}
     \mu^*  \indc{A_t \ne 0} &\ge&  \sum_{j \in [\kappa]}   [ a_j \indc{a_j+x_j=1}-  b_j \indc{a_j+x_j=0}] \\
                           && +  a_l \indc{A_{t,l}=1}-  b_l \indc{A_{t,l}=0} +\indc{B_t \ne 0}.\nonumber
\end{eqnarray}
Now observe that if $t$ is such that $a_l \indc{A_{t,l}=1}=1$, then $a_j \indc{a_j+x_j=1}=0$ for all $j\in [\kappa_1]$ by the definition of blind and normal agents and \eqref{eqn_short}. The RHS of~\eqref{eqn_proof_thm_three5} for such $t$ can at most be $1+\kappa_2$ which is obtained when $a_j=1$ for all $j=\kappa_1+1,\dots,\kappa$, $b_j>0$ for some $j\in [\kappa_1]$ and rest of the components are zero. If $t\in \I_\R$ is such that $a_l \indc{A_{t,l}=1}=0$, then the RHS of \eqref{eqn_proof_thm_three5} is at most $1+\kappa$ which is obtained when $a_j=1$ for all $j\in [\kappa]$, $b_l>0$ and rest of the components are zero. 
Since $1+\kappa> 1+\kappa_2$, the optimal $\mu^*=1+\kappa$, and PoA equals $\frac{1}{1+\kappa}$.

When $\kappa=n$, then the RHS of~\eqref{eqn_proof_thm_three5} is at most $n$ which is achieved at $a_j=1$ for all $j$, thus $\mu^*=n$, and the PoA equals $\frac{1}{n}$. 
Thus,
\begin{equation}\label{eqn_proof_thm_three6}
 {\rm PoA}(\fmc,w,\C,\N) = \max\left\{\frac{1}{1+\kappa},\frac{1}{n}\right\} \mbox{ for } \kappa \in [n].
\end{equation}\revtwo{This proves part (ii).}

\revtwo{Towards part (iii), observe that the case of $\kappa \ge n-1$ is exactly the same as the case of $\kappa=n$ in part (ii), and has the  PoA of $\frac{1}{n}$ achieved at marginal contribution. Now consider and the marginal contribution utility generating mechanism and note that the optimal $\mu^*$ satisfies the following:} 
\begin{equation}
     \mu^*  \indc{A_t \ne 0}\ge \indc{B_t\ne 0}+  \sum_{j \in [\kappa+1]}   [ a_j \indc{a_j+x_j=1}-  b_j \indc{a_j+x_j=0}]
\end{equation}The most binding constraint is obtained at $t\in \I_\R$ such that $a_j=1$ for all $j\in[\kappa+1]$, $b_{j}=1$ for some $j$, and rest of the components are zero. This implies $\mu^* = 2+\kappa$, \revtwo{hence the PoA at marginal contribution equals $\frac{1}{2+\kappa}$}. Thus the proof. \eop
\subsubsection*{Proof of Theorem \ref{thm_clique}} Since there exist classes $\C_l \in \C$ and $\C_p \in \C$ such that $\C_p \not \in \infoc{l}$, by Theorem \ref{thm_set_cov_com_fail}, \revtwo{$\poa\le\frac{1}{2}$}.  
The constraint in the LP \eqref{eqn_poa_dual_lp_gen} deriving the PoA when $w(\cdot)$ satisfies \eqref{eqn_def_set_cov}  and   $f_j = \fmc_j$ as in \eqref{eqn_def_MC} for all $j \in [k]$ is given by \eqref{eqn_cons_sc_mc}. Let $(\mu^*,\lambda^*_1,\dots,\lambda^*_k)$ represent the optimal solution of this LP, then by Lemma \ref{lem_opt_lmd_sc_mc}, $\mu^*$ satisfies the following for all $t\in \I_\R$ such that $A_t\ne 0$:
\begin{equation}\label{eqn_thm4_1}
  \mu^* \ge  \indc{B_t \ne 0} + \sum_{j \in [k]}  [ a_j \indc{A_{t,j}=1}-  b_j \indc{A_{t,j}=0}].
\end{equation}
For a network $(\C,\N)$ as in theorem hypothesis, there can be at most one index $j \in [k]$ such that $ a_j \indc{A_{t,j}=1}=1$ for any $t\in \I_\R$ . To see this, say $a_1 \indc{A_{t,1}=1}$, this is possible only when $a_1=1$, $x_1=0$ and $a_j=x_j=0$ for all $j \in \O_1\backslash\{1\}$ which implies $a_j \indc{A_{t,j}=1}=0$ for such $j$. Further, all $j\in [k]\backslash\O_1$ must satisfy $\C_1\in \N_j$ by the theorem hypothesis, which  implies $a_j\indc{A_{t,j}=1}=0$ as $a_1=1$. Thus  for all $t\in \I_\R$
\begin{eqnarray*}
     \indc{B_t \ne 0} + \sum_{j \in [k]}  [ a_j \indc{A_{t,j}=1}-  b_j \indc{A_{t,j}=0}] &\le&2
\end{eqnarray*}where the equality is  achieved for a $t\in \I_\R$ such that $a_1=1$ and $b_j=1$ for some $j \ne 1$. Hence $\mu^*=2$, and the PoA at $\fmc$ equals $\frac{1}{2}$. Thus the proof. \eop


\subsubsection*{Proof of Theorem \ref{thm_gen_sc}} \revtwo{Let $\Bar{\C} \subseteq \C$ denote the set of ‘blind’ or ‘isolated’ classes, then $k_1 = |\Bar{\C}|$. Towards part (i),} the constraint in the   LP deriving PoA  for any $f$ equals \eqref{eqn_cons_dual_sc}. \revtwo{When $k_1<k$, consider} a 
$t\in \I_\R$ with $a_j =1$ for all $j\in[k]$ such that $\C_j\in \Bar{\C}$, $x_{j}=1$ for some $j\in [k]$ such that $\C_j\in \C\backslash\Bar{\C}$, and all other components zero.  Since $\lambda_j \ge \frac{1}{f_j(1)}$ by \eqref{eqn_lb_lmd_sc}, the constraint \eqref{eqn_cons_dual_sc} at such $t$ gives
\revtwo{\begin{eqnarray}\label{eqn_proof_thm_five1}
    \mu &\ge&  1 + k_1, \mbox{ thus }\nonumber\\
    \poa &=& \frac{1}{\mu^*}\ \le\ \frac{1}{\mu}\ \le\ \frac{1}{1+k_1}.
\end{eqnarray}When $k_1=k$, consider the constraint in the  LP \eqref{eqn_cons_dual_sc} deriving PoA 
for any $f$ at $t \in \I_\R$ with $a_j =1$ for all $j\in [k]$ and all other are components zero.  Since $\lambda_j \ge \frac{1}{f_j(1)}$ by \eqref{eqn_lb_lmd_sc}, the constraint becomes $\mu \ge k$  and using similar arguments as in  \eqref{eqn_proof_thm_five1}, we get
\begin{eqnarray}
\poa &\le& \frac{1}{k}.
\end{eqnarray}This proves part (i).}

\revtwo{Toward part (ii), consider the marginal contribution utility generating mechanism \eqref{eqn_def_MC}.}
The constraint for any $t\in \I_\R$ is then given by \eqref{eqn_cons_sc_mc}.  By Lemma \ref{lem_opt_lmd_sc_mc} the optimal solution $(\mu^*,\lambda^*_1,\dots,\lambda^*_k)$ satisfies \eqref{eqn_thm4_1}.
Observe that the RHS of \eqref{eqn_thm4_1} has the following upper bound:
\begin{eqnarray}\label{eqn_lb_mc_set_c}
    \indc{B_t \ne 0} + \sum_{j \in [k]} a_j \indc{A_{t,j}=1}.
\end{eqnarray}
\revtwo{Since $k_b,k_n\ge 1$, let $\C_p$ be the blind class. By definition, there exists some $\C_l$ such that $\C_p \in \N_l$. Hence} either $a_p \indc{A_{t,p}=1}=1$ or $a_l \indc{A_{t,l}=1}=1$; both terms cannot be simultaneously positive since $\C_p \in \infoc{l}$.
Hence, the term in \eqref{eqn_lb_mc_set_c} is at most $k$, thus at optimality,
\begin{eqnarray}\label{eqn_thm_5_2}
    \mu^*  \le k \mbox{ thus } {\rm PoA}(\fmc,w,\C,\N)\ =\ \frac{1}{\mu^*} \ \ge\ \frac{1}{k}.
\end{eqnarray}\revtwo{This proves part (ii).
Toward part (iii), when $k_b=0$ and $k_i,k_n \ge 1$, note that there exists no $\C_p \in \Bar{\C}$ and $\C_l \in \C\backslash\Bar{\C}$ such that $\C_p \in \infoc{l}$,} the term in  \eqref{eqn_lb_mc_set_c} is at most $1+k$, thus 
\begin{eqnarray}\label{eqn_thm_5_3}
     \mu^*  \le k \mbox{ thus } {\rm PoA}(\fmc,w,\C,\N)\ =\ \frac{1}{\mu^*} \ \ge\ \frac{1}{1+k}.
\end{eqnarray}
\revtwo{When $k_n=0$ and hence $k_1 = k_b+k_i =k$,} consider the  marginal contribution as in \eqref{eqn_def_MC}. The constraint deriving PoA for any $t\in \I_\R$ is then given by  \eqref{eqn_cons_sc_mc}. Note that the most binding constraints are the ones with $t$ such that $a_j=1$, $x_j=0$  and $b_j=1$ for all $j\in [k]$. That is,
\begin{eqnarray*}
      \mu  &\ge &  1 + \sum_{j \in [k]} \lambda_j.
\end{eqnarray*}By Lemma \ref{lem_opt_lmd_sc_mc} we get $\mu^*=1+k$, and hence \revtwo{$\poa \ge \frac{1}{1+k}$. This proves part (iii)}. \eop

\begin{IEEEbiography}[{\includegraphics[width=1in,height=1.25in,clip,keepaspectratio]{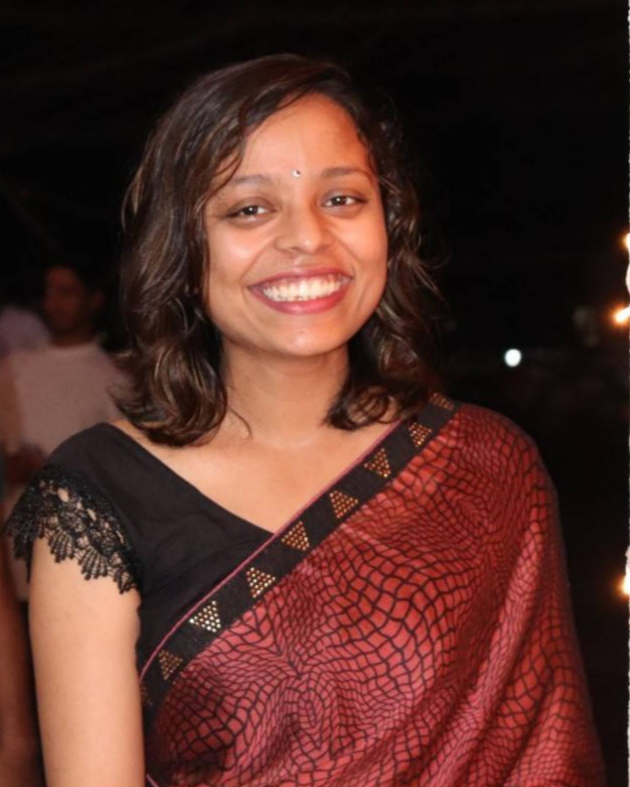}}]{Vartika Singh} (Member, IEEE) received the Bachelor of Technology in Textile Chemistry from Dr. A.P.J. Abdul Kalam Technical University, Lucknow, Uttar Pradesh, India in 2018. She received the Master of Technology and
PhD in Industrial Engineering and Operations Research from Indian Institute of Technology Bombay, Mumbai, Maharashtra, India in 2024. 

After graduating, she worked as a postdoctoral research associate in the Department of Computer Science at the University of Colorado Colorado Springs until 2025 under the supervision of Philip N. Brown.  Her research interests include multi-agent coordination using game theory and competitive Markov decision processes.

\end{IEEEbiography}

\begin{IEEEbiography}[{\includegraphics[width=1in,height=1.25in,clip,keepaspectratio]{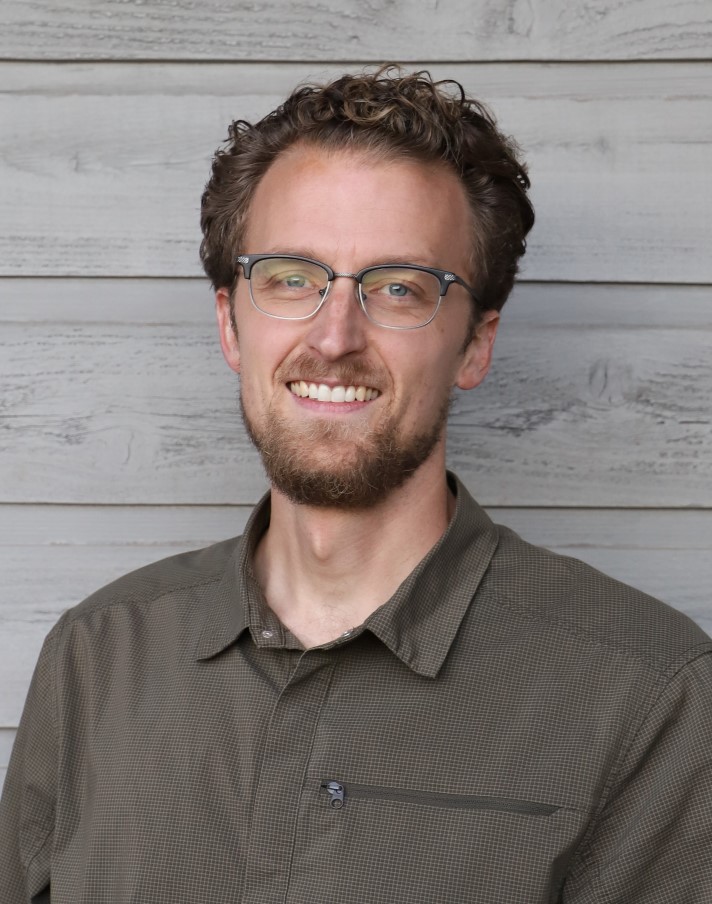}}]{Philip N. Brown} (Member, IEEE)
 is an Associate Professor in the Department of Computer Science at the University of Colorado Colorado Springs and a Spring 2026 Visiting Professor at the Politecnico di Torino. Philip received the Bachelor of Science in Electrical Engineering in 2007 from Georgia Tech. 
 He received the Master of Science in Electrical Engineering in 2015 from the University of Colorado at Boulder under the supervision of Jason R. Marden, where he was a recipient of the University of Colorado Chancellor's Fellowship. He received the PhD in Electrical and Computer Engineering from the University of California, Santa Barbara under the supervision of Jason R. Marden.
 He received the 2018 CCDC Best PhD Thesis Award from UCSB, the Best Paper Award from GameNets 2021, a 2023 AFOSR Young Investigator Program award, a 2025 NSF CAREER award, and a 2025 ARO Early Career Program award.
 Philip is interested in the interactions between engineered and social systems.
\end{IEEEbiography}

\end{document}